\newcommand\incircbin{ \mathpalette\@incircbin}
\newcommand\@incircbin[2]{\mathbin{\ooalign{\hidewidth$#1#2$\hidewidth\crcr$#1\bigcirc$}}}
\newcommand{\smallerprod}{\leftslice}
\newcommand{\domprod}{\ast}
\newcommand{\Boolprod}{\cdot}
\newcommand{\ket}[1]{\vert #1 \rangle}
\newcommand{\Int}{\mathbb Z}
\newcommand{\ceil}[1]{\left\lceil #1 \right\rceil}
\newcommand{\mybar}[1]{\lambda}
\newcommand{\arrayf}[1]{\mathsf{#1}}
\newcommand{\col}[2]{\mathrm{col}(#1,#2)}
\newcommand{\poly}{\mathrm{poly}}
\newcommand{\polylog}{\mathrm{polylog}}
\newtheorem{theorem}{Theorem}[section]
\newtheorem{proposition}{Proposition}[section]
\newtheorem{definition}{Definition}[section]
\newtheorem{lemma}{Lemma}[section]
\newtheorem{fact}{Fact}
\begin{document}
\pagestyle{plain}
\author{%
Fran{\c c}ois Le Gall \\ The University of Tokyo\\ \texttt{legall@is.s.u-tokyo.ac.jp}
\and Harumichi Nishimura \\ Nagoya University\\ \texttt{hnishimura@is.nagoya-u.ac.jp} 
}
\title{Quantum Algorithms for Matrix Products over Semirings}
\date{}

\maketitle
\begin{abstract}
In this paper we construct quantum algorithms for matrix products over several algebraic structures called semirings, including the $(\max,\min)$-matrix product, the distance matrix product and the Boolean matrix product. In particular, we obtain the following results.
\begin{itemize}
\item
We construct a quantum algorithm computing the product of two $n\times n$ matrices over the $(\max,\min)$ semiring with time complexity $O(n^{2.473})$. In comparison, the best known classical algorithm for the same problem, by Duan and Pettie (SODA'09), has complexity $O(n^{2.687})$. As an application, we obtain a $O(n^{2.473})$-time quantum algorithm for computing the all-pairs bottleneck paths of a graph with~$n$ vertices, while classically the best upper bound for this task is $O(n^{2.687})$, again by Duan and Pettie.
\item
We construct a quantum algorithm computing the $\ell$ most significant bits of each entry of the distance product of two $n\times n$ matrices in time $O(2^{0.64\ell} n^{2.46})$. In comparison, prior to the present work, the best known classical algorithm for the same problem, by Vassilevska and Williams (STOC'06) and Yuster (SODA'09), had complexity $O(2^{\ell}n^{2.69})$. Our techniques lead to further improvements for classical algorithms as well, reducing the classical complexity to $O(2^{0.96\ell}n^{2.69})$, which gives a sublinear dependency on $2^\ell$.
\end{itemize}
The above two algorithms are the first quantum algorithms that perform better than the $\tilde O(n^{5/2})$-time straightforward quantum algorithm based on quantum search for matrix multiplication over these semirings. We also consider the Boolean semiring, and construct a quantum algorithm computing the product of two $n\times n$ Boolean matrices that outperforms the best known classical algorithms for sparse matrices. For instance, if the input matrices have $O(n^{1.686\ldots})$ non-zero entries, then our algorithm has time complexity $O(n^{2.277})$, while the best classical algorithm has complexity $\tilde O(n^{\omega})$, where $\omega$ is the exponent of matrix multiplication over a field (the best known upper bound on $\omega$ is $\omega<2.373$). 
\end{abstract}


\setcounter{footnote}{0}
\section{Introduction}

\noindent{\bf Background.}
Matrix multiplication over semirings has a multitude of applications
in computer science, and in particular in the area of graph algorithms (e.g., \cite{Duan+SODA09,Shapira+SODA07,VassilevskaPhD08,Vassilevska+STOC06,Vassilevska+09,YusterSODA09}). 
One example is Boolean matrix multiplication, related for instance to the computation
of the transitive closure of a graph, where the product  
of two $n\times n$ Boolean matrices $A$ and $B$ is
defined as the $n\times n$ Boolean 
matrix $C=A\Boolprod B$ such that $C[i,j]=1$ if and only if there exists 
a $k\in\{1,\ldots,n\}$ such that $A[i,k]=B[k,j]=1$. 

More generally, given a set $R\subseteq \Int\cup\{-\infty,\infty\}$ and two binary operations 
$\oplus\colon R\times R\to R$ and $\odot\colon R\times R\to R$,
the structure $(R, \oplus,\odot)$ is a semiring if it behaves like a ring 
except that there is no requirement on the existence of an 
inverse with respect to the operation~$\oplus$.
Given two $n\times n$ matrices $A$ and $B$ over~$R$, the matrix product over
$(R,\oplus,\odot)$ is the $n\times n$ matrix $C$ defined as 
$
C[i,j]=\bigoplus_{k=1}^n \left(A[i,k]\odot B[k,j]\right)
$
for any $(i,j)\in\{1,\ldots,n\}\times \{1,\ldots,n\}$.
The Boolean matrix product is simply the matrix product over the semiring $(\{0,1\},\vee,\land)$.
The $(\max,\min)$-product and the distance product,
which both have applications to a multitude of tasks in graph theory such as constructing
fast algorithms for  
all-pairs paths problems
(see, e.g., \cite{VassilevskaPhD08}), are the matrix products over the 
semiring $(\Int\cup\{-\infty,\infty\},\max,\min)$
and the semiring $(\Int\cup\{\infty\},\min,+)$, respectively.

Whenever the operation $\oplus$ is such that
a term as $\bigoplus_{k=1}^n x_k$ can be computed in $\tilde O(\sqrt{n})$ time
using quantum techniques  
(e.g., for $\oplus=\vee$ using Grover's algorithm~\cite{GroverSTOC96} or for $\oplus=\min$ and $\oplus=\max$
using quantum algorithms for minimum finding~\cite{Durr+96}) and each operation $\odot$ can be implemented in 
$\polylog(n)$ time, the product of two $n\times n$ matrices over the semiring $(R,\oplus,\odot)$ can 
be computed in time $\tilde O(n^{5/2})$ on a quantum computer.\footnote{In this paper the notation 
$\tilde O(\cdot)$ suppresses the $n^{o(1)}$ factors.} This is true for instance 
for the Boolean matrix product, and 
for both the $(\max,\min)$ and distance matrix products.

A fundamental question is whether we can do better than those $\tilde O(n^{5/2})$-time straightforward quantum  
algorithms. For the Boolean matrix product, the answer is affirmative since it can be computed classically in time 
$\tilde O(n^\omega)$, where $\omega<2.373$ is the exponent of square matrix multiplication over a field. 
However, Boolean matrix product appears to be an exception, and for most 
semirings it is not known if matrix multiplication can be done 
in $\tilde O(n^\omega)$-time.
For instance, the best known classical algorithm for the $(\max,\min)$-product, by Duan and Pettie \cite{Duan+SODA09}, has time complexity $\tilde O(n^{(3+\omega)/2})=O(n^{2.687})$
while, for the distance product,  
no truly subcubic classical algorithm is even known (without introducing assumptions on the matrices). \vspace{2mm}

\noindent{\bf Our results.} 
We construct in this paper the first quantum algorithms 
with exponent strictly smaller than $5/2$ for matrix multiplication over several semirings.

We first obtain the following result for matrix multiplication over the $(\max,\min)$ semiring.
\begin{theorem}\label{the2}
There exists a quantum algorithm that computes,
with high probability, 
the 
$(\max,\min)$-product of two $n\times n$ matrices in time 
$O(n^{2.473})$.
\end{theorem}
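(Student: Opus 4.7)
\medskip

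\noindent\textbf{Proof plan.} My plan is to combine the classical thresholding framework of Duan--Pettie with quantum subroutines, balancing the parameters to reach an exponent strictly below $5/2$. Recall the standard observation that for any threshold $t$, the entry $C[i,j]$ of the $(\max,\min)$-product satisfies $C[i,j]\geq t$ if and only if the Boolean product of the thresholded matrices $A^{(t)}$ and $B^{(t)}$ (with $A^{(t)}[i,k]=1 \Leftrightarrow A[i,k]\geq t$, and analogously for $B$) has its $(i,j)$-entry equal to $1$. Since there are $O(n^2)$ distinct thresholds, one cannot afford a Boolean product at every threshold; the algorithm must pin down each $C[i,j]$ using only a few global Boolean products together with targeted quantum search.

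\medskip

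\noindent\textbf{Step 1: landmark thresholds.} Sort all values in $A$ and $B$ and select $L=n^{a}$ landmark thresholds evenly spaced in rank, so that only $\tilde O(n^{2}/L)$ distinct input values lie between any two consecutive landmarks. For each landmark, compute the Boolean product of the two thresholded matrices using fast (classical) matrix multiplication. This localises each $C[i,j]$ into one of $L$ intervals, each containing at most $\tilde O(n^{2}/L)$ candidate values.

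\medskip

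\noindent\textbf{Step 2: hitting-set / sampling phase.} Following Duan--Pettie, sample a random subset $S$ of the middle index set of a suitable size and compute a rectangular Boolean product restricted to the indices in $S$. With high probability this resolves every entry $C[i,j]$ whose optimal witness $k$ is ``popular'' (i.e.\ appears sufficiently often as a valid witness in its row), leaving only entries with a small residual witness set for Step~3.

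\medskip

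\noindent\textbf{Step 3: quantum search on residual entries.} For each remaining entry, both the set of candidate values (narrowed by Step~1) and the set of candidate witnesses (narrowed by Step~2) are small. We find the correct value by running a D\"urr--H{\o}yer quantum minimum-finding procedure combined with a Grover-style witness search, grouping together entries whose residual problems share structure so that the underlying evaluations can be expressed as rectangular matrix products to be amplitude-amplified. This keeps the amortised cost of Step~3 below the targeted budget.

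\medskip

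\noindent\textbf{Balancing and main obstacle.} The exponent $a$, the size of $S$, and the rectangular matrix multiplication exponent used in Step~2 must be chosen jointly so that the three step costs meet at the same exponent; carrying out this optimisation is where I expect $2.473$ to appear. The main technical obstacle I anticipate is Step~3: a cell-by-cell application of quantum minimum finding only recovers the trivial $\tilde O(n^{5/2})$ bound, so to go strictly below $5/2$ one must amortise across entries, either by batching many witness-finding subproblems into a single matrix-product computation that can be sped up by amplitude amplification, or by recasting Step~3 as an instance of quantum output-sensitive Boolean matrix multiplication on suitably sparsified matrices and balancing the resulting sparsity against the Step~1/Step~2 budgets.
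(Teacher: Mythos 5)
Your high-level skeleton (algebraic preprocessing to narrow the search, followed by amortised quantum search) is the right one, and the Step 1 observation about threshold Boolean products for the $(\max,\min)$-product is correct. However, the proposal as written has a genuine gap that your own ``main obstacle'' paragraph does not resolve, and the intermediate machinery you propose (global landmarks and a Duan--Pettie hitting set) is not what the paper uses.

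What the paper actually does: it first reduces the $(\max,\min)$-product to the one-sided product $\smallerprod$ via $C[i,j]=\max\{(A\smallerprod B)[i,j],(B^T\smallerprod A^T)[j,i]\}$. For $A\smallerprod B$, it sorts each \emph{row} of $A$ separately and partitions it into $s=\lceil n/g\rceil$ buckets $R^i_1,\ldots,R^i_s$ of size $g$. The quantity it computes for each $(i,j)$ is not which value interval contains $C[i,j]$, but \emph{which bucket} $r_{ij}$ of row $i$ holds the witness, via a single call to a new primitive — the \emph{generalized existence dominance product} — that batches the $s$ existence-dominance queries $A_r\domprod B$ ($r=1,\ldots,s$) into one rectangular Boolean matrix product plus one global quantum-enumeration phase. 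The enumeration is made affordable by a column-balancing preprocessing step and explicit index arrays that give $\poly(\log n)$-time addressable access to the union of all candidate witness triples, so that Grover search can be amortised across all $(i,j)$ simultaneously. Once $r_{ij}$ is known, the finishing step is a D\"urr--H{\o}yer max-find over the at most $g$ entries in that bucket, at total cost $\tilde O(n^2\sqrt g)$; the exponent $2.473$ then comes from balancing $g=n^{\delta}$ against the rectangular exponent $\omega(1+\gamma,1+\gamma,1)$ subject to $\gamma+\delta=1$.

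Why your plan does not close the gap: your Step 1 shrinks the set of candidate \emph{values} for each $C[i,j]$ but not the set of candidate \emph{witnesses} $k$; knowing $C[i,j]\in[t_1,t_2)$ still leaves all $n$ indices to search per entry. Your Step 2 random sample tells you that unresolved entries have small witness sets, but not where those witnesses are, so the residual per-entry search remains over $n$ indices and you are back at $\tilde O(n^{5/2})$. The missing ingredient is exactly what the generalized existence dominance product supplies: a way to express the ``find the right bucket for every $(i,j)$'' task as one rectangular matrix product over a field together with one quantum enumeration over an explicitly indexed, column-balanced search space, so that the witness-finding cost is charged globally rather than cell by cell. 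Neither of your two candidate directions for Step 3 describes this mechanism, so as written the proposal is a plan rather than a proof.
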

\noindent
In comparison, the best known classical algorithm for the $(\max,\min)$-product, by Duan and Pettie~\cite{Duan+SODA09}, has time complexity $\tilde O(n^{(3+\omega)/2})=O(n^{2.687})$,
as mentioned above.
The $(\max,\min)$-product has mainly been studied in the field
in fuzzy logic~\cite{Dubois+80} under the name \emph{composition of relations}
and in the context of computing the all-pairs bottleneck paths of a graph (i.e., 
computing, for all pairs $(s,t)$ of vertices in a graph,
the maximum flow that can be routed between $s$ and~$t$). 
More precisely, it is well known (see, e.g., \cite{Duan+SODA09,Shapira+SODA07,Vassilevska+09}) that
if the $(\max,\min)$-product of two $n\times n$ matrices
can be computed in time $T(n)$,
then the all-pairs bottleneck paths of a graph with $n$ vertices can 
be computed in time $\tilde O(T(n))$. 
As an application of Theorem~\ref{the2}, 
we thus obtain a $O(n^{2.473})$-time quantum algorithm computing the all-pairs 
bottleneck paths of a graph of $n$ vertices, while classically the best upper bound for this task is $O(n^{2.687})$,
again from~\cite{Duan+SODA09}.

In order to prove Theorem \ref{the2}, we construct 
a quantum algorithm that computes the  
product of two $n\times n$ matrices over the existence dominance semiring (defined
in the next section) in time $\tilde O(n^{(5+\omega)/3})\le O(n^{2.458})$. 
The dominance product has applications in computational geometry \cite{Matousek91} and graph
algorithms \cite{Vassilevska+STOC06}
and, in comparison, the best known classical algorithm for this product \cite{YusterSODA09}
has complexity $O(n^{2.684})$.
Computing efficiently the existence dominance product is, nevertheless, 
not enough for our purpose.
We introduce (in Section \ref{sec_existence})
a new generalization of it that we call
the \emph{generalized existence dominance product},
and develop both quantum and classical algorithms that compute efficiently this product.
This is the most technical part of this paper.

We also show (in Subsection \ref{subsec_dist}) how 
these results for the generalized existence dominance product can 
be used to construct classical and quantum
algorithms computing the $\ell$ most significant bits of 
each entry of the distance product of two $n\times n$ matrices.
In the quantum setting, we obtain time complexity
$
\tilde O\left(2^{0.640\ell} n^{(5+\omega)/3}\right)\le O(2^{0.640\ell} n^{2.458}).
$
In comparison, prior to the present work, the best known classical algorithm for the same problem by Vassilevska and Williams~\cite{Vassilevska+STOC06} had complexity 
$
\tilde O\big( 2^{\ell}n^{(3+\omega)/2}\big)\le O(2^{\ell} n^{2.687}),
$
with a slight improvement on the exponent of $n$ obtained later by Yuster~\cite{YusterSODA09}.
We obtain an improvement for this classical time complexity as well,
reducing it 
to $\tilde O\big( 2^{0.960\ell}n^{(3+\omega)/2}\big)$, which gives a sublinear dependency on $2^\ell$.

These results are, to the best of our knowledge, the first quantum algorithms for matrix 
multiplication over semirings other than the Boolean semiring improving over the 
straightforward $\tilde O(n^{5/2})$-time quantum algorithm, and the first nontrivial quantum 
algorithms offering a speedup with respect to the best classical algorithms 
for matrix multiplication when no assumptions are made on the sparsity of the matrices involved 
(sparse matrix multiplication is discussed below).
This shows that, while quantum algorithms may not be able to outperform the classical 
$\tilde O(n^\omega)$-time algorithm for matrix multiplication of (dense) matrices over a ring,
they can offer a speedup for matrix multiplication over other algebraic structures.

We finally investigate under which conditions quantum algorithms faster than the best known 
classical algorithms can be constructed for Boolean matrix multiplication.
This question has been recently studied extensively in the output-sensitive scenario \cite{Buhrman+SODA06,Jeffery+ICALP12,LeGallSODA12,LeGallISAAC12}, for which quantum algorithms multiplying two $n\times n$ Boolean matrices with query complexity $\tilde O(n\sqrt{\lambda})$ and time complexity
$\tilde O(n\sqrt{\lambda}+\lambda\sqrt{n})$ were constructed, where $\lambda$ denotes the number of non-zero entries in the output matrix.
In this work, we focus on the case where the input matrices are sparse (but not necessarily the output matrix), 
and evaluate the performance of quantum algorithms in this scenario. Our result (Theorem \ref{prop}) 
shows how several standard combinatorial ideas for sparse Boolean matrix multiplication can be adapted in the quantum
setting, and used to construct quantum algorithms faster than the best known classical algorithms. In particular, we obtain the following result. 

\begin{theorem}[simplified version]\label{th_Booleansparse}
Let $A$ and $B$ be two $n\times n$ Boolean matrices each containing at most $m$ non-zero entries.
There exists a quantum algorithm that computes, with high probability,
the Boolean matrix product $A\Boolprod B$ and has time complexity
\[
\left\{\begin{array}{ll}
\tilde O(n^2)&\textrm{if }m\le n^{1.151},\\
\tilde O\left(m^{0.517}n^{1.406}\right)&
\textrm{if }n^{1.151}\le m\le n^{\omega-1/2},\\
\tilde O(n^\omega)&\textrm{if }n^{\omega-1/2}\le m\le n^2.\\
\end{array}
\right. 
\] 
\end{theorem}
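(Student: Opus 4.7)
The plan is to follow the standard recipe for sparse Boolean matrix multiplication, using quantum Grover search to speed up the sparse contribution. Fix a threshold $\Delta>0$ and let $b_k$ denote the number of non-zeros in row $k$ of $B$. Call $k$ \emph{heavy} if $b_k>\Delta$ and \emph{light} otherwise; let $H$ and $L$ denote the sets of heavy and light indices, so $|H|\le m/\Delta$. Write
\[
A\Boolprod B \;=\; \bigl(A_{\cdot,L}\Boolprod B_{L,\cdot}\bigr) \;\vee\; \bigl(A_{\cdot,H}\Boolprod B_{H,\cdot}\bigr),
\]
where $A_{\cdot,S}$ and $B_{S,\cdot}$ are obtained from $A$ and $B$ by zeroing out columns and rows whose index lies outside~$S$.

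For the heavy part I would use the best classical rectangular matrix multiplication algorithm (embedding the Boolean product into an integer product), at cost $\tilde O(n^{\omega(1,\log_n|H|,1)})$. For the light part I would use quantum search, either as a per-entry Grover search for a witness $k\in L$ or as the output-sensitive quantum Boolean matrix multiplication of Jeffery--Magniez--de~Wolf and Le~Gall applied to $A_{\cdot,L}\Boolprod B_{L,\cdot}$. The key ingredients are the inequalities $\sum_{k\in L}b_k\le n\Delta$ (total non-zeros of $B_{L,\cdot}$) and $\sum_{k\in L}a_k b_k\le \Delta m$ (witness triples, hence non-zeros of the light-part output), which make the cost of the quantum subroutine decrease with~$\Delta$.

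Choosing $\Delta$ to balance the two costs finishes the proof. For $m\le n^{1.151}$ one can take $\Delta$ small enough that the heavy part vanishes and the algorithm reduces to a light-part computation whose cost is dominated by the $\Theta(n^2)$ output size. For $m\ge n^{\omega-1/2}$ the straightforward classical dense algorithm of cost $\tilde O(n^\omega)$ already dominates, so no partitioning is needed. In the intermediate regime, plugging the best known numerical upper bounds on $\omega$ and on the rectangular exponents $\omega(1,\mu,1)$ into the balance equation yields both the middle-regime bound $\tilde O(m^{0.517}n^{1.406})$ and the transition points $n^{1.151}$ and $n^{\omega-1/2}$.

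The main obstacle, in my view, is arranging the quantum light-part subroutine so that its cost genuinely shrinks with $\Delta$: a naive per-entry Grover search over the non-zeros of a single row of $A$ has cost $\tilde O(n^{3/2}\sqrt{m})$ independent of $\Delta$ and yields only a weak trade-off. The fix is to run Grover on the $B$-side, where the light row-weight bound $\Delta$ is effective, or to use the output-sensitive quantum BMM with the $\Delta m$ bound on the output non-zero count. Only then does a real $\Delta$-dependence appear, and the optimization against the classical rectangular cost produces the stated numerical exponents.
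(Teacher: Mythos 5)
Your high-level strategy --- a heavy/light split on the rows of $B$, classical rectangular matrix multiplication for the heavy part, and a quantum subroutine for the light part, balanced over $\Delta$ --- is essentially the paper's. (The paper first proves a more general four-term decomposition with thresholds on rows of $A$, rows of $B$, and columns of $B$ simultaneously, Theorem~\ref{prop}, and then specializes by setting the $A$-row and $B$-column thresholds to degenerate values, which collapses to exactly your two-term split on $B$-rows.)

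The one genuine issue is the light-part subroutine. Your first suggested fix --- applying the output-sensitive quantum BMM of Jeffery et al./Le Gall to $A_{\cdot,L}\Boolprod B_{L,\cdot}$ with the bound $\lambda'\le \Delta m$ on the output non-zero count --- does \emph{not} yield the stated exponent. Those algorithms run in $\tilde O(n\sqrt{\lambda'}+\lambda'\sqrt{n})$ and do not exploit input sparsity; when the output is dense ($\lambda'\approx n^2$, which holds in the regime the theorem covers), this is $\tilde O(n^{5/2})$, strictly worse than e.g.\ the claimed $O(n^{2.277})$ for $m\approx n^{1.686}$, and re-optimizing $\Delta$ does not help. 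Your second suggestion (``run Grover on the $B$-side'') is the right one, but it must be made precise as quantum enumeration over the full set of candidate witnesses $\{(i,j,k):k\in L,\ A[i,k]=B[k,j]=1\}$. This set has size $N\le\sum_{k\in L}a^C_kb^R_k\le m\Delta$, and quantum enumeration recovers all distinct $(i,j)$ in its image in $\tilde O(\sqrt{N(\lambda'+1)})=\tilde O\big(\sqrt{m\Delta\cdot\min(\lambda,m\Delta)}\big)$ time. The nontrivial implementation point, which the paper addresses explicitly, is building a bijection from $\{1,\ldots,N\}$ to the witness set that is evaluable in $\polylog(n)$ time --- via prefix sums of the light row-weights $b^R_k$ and arrays listing the non-zeros of each such row --- so that Grover can actually be run over $\{1,\ldots,N\}$. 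With that subroutine, balancing $\tilde O\big(\sqrt{m\Delta\cdot\min(\lambda,m\Delta)}\big)$ against $\tilde O\big(n^{\omega(1,\log_n(m/\Delta),1)}\big)$ via Fact~\ref{fact_RMM1} gives exactly the claimed exponents and thresholds. (One small imprecision elsewhere: for $n\le m\le n^{1.151}$ the heavy part does not ``vanish''; rather $|H|\le n^\alpha$, so the rectangular product already costs only $\tilde O(n^2)$.)
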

\noindent
The complexity of the algorithm of Theorem \ref{th_Booleansparse} is the piece-linear function of $\log_n(m)$
represented in Figure \ref{fig}. 
In comparison, the best known classical algorithm, 
by Yuster and Zwick \cite{Yuster+05}, has complexity 
$\tilde O(n^2)$ if  $m\le n^{1.151}$,
$\tilde O(m^{0.697}n^{1.199})$ if $n^{1.151}\le m\le n^{(1+\omega)/2}$,
and
$\tilde O(n^\omega)$ if $n^{(1+\omega)/2}\le m\le n^2$.
Our algorithm performs better when $n^{1.151}< m< n^{\omega-1/2}$. 
For instance, if $m=O(n^{(1+\omega)/2})=O(n^{1.686...})$, 
then our algorithm has complexity
$O(n^{2.277})$, while the algorithm by \cite{Yuster+05} has complexity $\tilde O(n^\omega)$.

\begin{figure}\label{fig}
\begin{center}
 \includegraphics[scale=0.8]{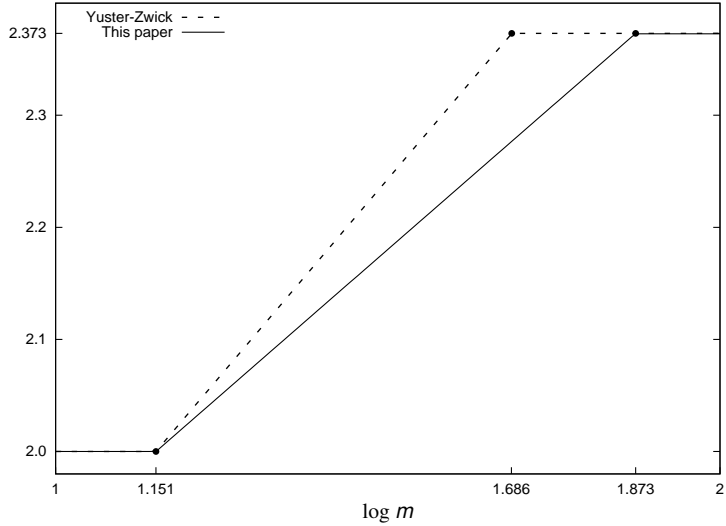}\vspace{-3mm}
\caption{The upper bounds of Theorem \ref{th_Booleansparse}  (in solid lines). 
The horizontal axis represents the logarithm of~$m$ with respect to basis $n$ (i.e., the value $\log_n(m)$). 
The vertical axis represents the logarithm of the complexity with respect to basis $n$.
The dashed lines represent the upper bounds obtained by \cite{Yuster+05}.
}
\end{center}\vspace{-8mm}
\end{figure}

Our main quantum tool is rather standard: quantum enumeration, a variant of Grover's search algorithm. 
We use this technique in various ways to improve the combinatorial steps in several
classical approaches \cite{Amossen+09,Duan+SODA09,Vassilevska+09,Yuster+05} 
that are based on a combination of algebraic steps (computing some matrix products over a field) 
and combinatorial steps. 
Moreover, the speedup obtained by quantum enumeration enables us to depart from these original approaches and 
optimize the combinatorial and algebraic steps in different ways, for instance relying on rectangular matrix multiplication
instead of square matrix multiplication. 
On the other hand, several subtle but crucial issues appear when trying to apply quantum enumeration, 
such as how to store and access information computed during the preprocessing steps, which
induces complications and requires the introduction of new algorithmic ideas.
We end up with algorithms fairly remote from these original
approaches, where most steps are tailored for the use of quantum enumeration. 
More detailed technical overviews are given at the 
beginning of Sections  \ref{sec_existence}, \ref{sec_maxmin} and~\ref{sec_red}. 

\section{Preliminaries}\label{prelim}
\noindent{\bf Rectangular matrix multiplication over fields}.
For any $k_1,k_2,k_3>0$, let $\omega(k_1,k_2,k_3)$ represent the minimal value $\tau$
such that, over a field, the product of an $n^{k_1}\times n^{k_2}$ matrix by an 
$n^{k_2}\times n^{k_3}$ matrix can be computed with $\tilde O(n^\tau)$ arithmetic operations.
The value $\omega(1,1,1)$ is denoted by $\omega$,
and the current best upper bound on $\omega$ is $\omega<2.373$, see \cite{Stothers10,WilliamsSTOC12}.
Other important quantities are 
the value $\alpha=\sup\{k\:|\:\omega(1,k,1)=2\}$ and the value
$\beta=(\omega-2)/(1-\alpha)$. 
The current best lower bound on $\alpha$ is $\alpha>0.302$, see \cite{LeGallFOCS12}.
The following facts are known, and will be used in this paper.
We refer to $\cite{Burgisser+97,Huang+98}$ for details.

\begin{fact}\label{fact_RMM1}
$\omega(1,k,1)=2$ for  $k\le \alpha$ and $\omega(1,k,1)\le 2+\beta(k-\alpha)$ for $\alpha\le k\le 1$.
\end{fact}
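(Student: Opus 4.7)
The plan is to establish both parts from three well-known structural properties of the function $k\mapsto\omega(1,k,1)$: monotonicity, the trivial lower bound of $2$, and convexity. Once these are in hand, the two claims reduce to elementary manipulations.

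First, I would note monotonicity: for $k_1\le k_2$, any $n^{k_2}$-inner-dimension algorithm yields one for inner dimension $n^{k_1}$ by padding the input with zeros, so $\omega(1,k_1,1)\le \omega(1,k_2,1)$. The lower bound $\omega(1,k,1)\ge 2$ is automatic because the output already has $n^2$ entries. Combining these with the definition $\alpha=\sup\{k\st \omega(1,k,1)=2\}$, any $k<\alpha$ admits some $k'\in(k,\alpha)$ with $\omega(1,k',1)=2$, and then $2\le\omega(1,k,1)\le\omega(1,k',1)=2$. This handles the strict inequality $k<\alpha$ in the first claim; the boundary point $k=\alpha$ will be recovered from continuity, coming for free from convexity below.

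Next I would establish that $k\mapsto\omega(1,k,1)$ is convex on $[0,1]$. The standard argument is via tensor powers: for rational $\lambda=p/(p+q)$ and $k=\lambda k_1+(1-\lambda)k_2$, taking the Kronecker product of $p$ copies of an optimal algorithm for $\langle n,n^{k_1},n\rangle$ with $q$ copies of one for $\langle n,n^{k_2},n\rangle$ gives an algorithm for $\langle N,N^k,N\rangle$ with $N=n^{p+q}$ and complexity $N^{\lambda\omega(1,k_1,1)+(1-\lambda)\omega(1,k_2,1)+o(1)}$. Taking $n\to\infty$ and then letting $p/(p+q)$ range over the rationals in $[0,1]$ yields convexity for all $\lambda\in[0,1]$. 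Convexity on $[0,1]$ implies continuity on the interior and upper semicontinuity at the endpoints, which is enough to push $\omega(1,\alpha,1)=2$ from the strict inequality case already handled.

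For the second claim, I would apply convexity to the two endpoints $(\alpha,2)$ and $(1,\omega)$. For any $k\in[\alpha,1]$, write $k=\lambda\alpha+(1-\lambda)\cdot 1$ with $\lambda=(1-k)/(1-\alpha)$; convexity then gives
\[
\omega(1,k,1)\le \lambda\cdot\omega(1,\alpha,1)+(1-\lambda)\cdot\omega(1,1,1)=2+(1-\lambda)(\omega-2)=2+\frac{\omega-2}{1-\alpha}(k-\alpha),
\]
which is exactly $2+\beta(k-\alpha)$ with $\beta=(\omega-2)/(1-\alpha)$, as claimed.

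The only genuinely nontrivial ingredient is the convexity, since monotonicity and the output-size lower bound are immediate. Thus the main obstacle is producing (or citing) the tensor-product composition that yields convexity, after which everything else is bookkeeping with the definitions of $\alpha$ and $\beta$.
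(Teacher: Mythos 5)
The paper states this as a known Fact without proof, citing B\"urgisser--Clausen--Shokrollahi and Huang--Pan, and your argument is precisely the standard derivation one finds in those sources: monotonicity of $k\mapsto\omega(1,k,1)$ by zero-padding, the trivial lower bound $\omega(1,k,1)\ge 2$, and convexity via tensoring bilinear algorithms, followed by interpolation between $(\alpha,2)$ and $(1,\omega)$ to produce the slope $\beta=(\omega-2)/(1-\alpha)$. The proof is correct; the only cosmetic point is that for $k=\alpha$ you only need ordinary continuity of a convex function at an interior point of $[0,1]$ (which $\alpha$ is, since $0<\alpha<1$), so the remark about upper semicontinuity at the endpoints is superfluous though harmless.
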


\begin{fact}\label{fact_RMM2}
The following relations hold for any values $k_1,k_2,k_3>0$: 
(i) $\omega(kk_1,kk_2,kk_3)=k\omega(k_1,k_2,k_3)$ for any $k>0$; 
(ii) $\omega(k_{\pi(1)},k_{\pi(2)},k_{\pi(3)})=\omega(k_1,k_2,k_3)$ for any permutation $\pi$ over $\{1,2,3\}$; 
(iii) $\omega(k_1,k_2,1+k_3)\le \omega(k_1,k_2,1)+k_3$; 
(iv) $\omega(k_1,k_2,k_3)\ge \max\{k_1+k_2,k_1+k_3,k_2+k_3\}$.
\end{fact}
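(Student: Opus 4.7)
The plan is to treat each of the four parts by a separate standard argument from bilinear complexity theory.

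Parts (i) and (iii) are direct. For (i), setting $N = n^k$, a product of an $n^{kk_1}\times n^{kk_2}$ matrix by an $n^{kk_2}\times n^{kk_3}$ matrix is by definition an $N^{k_1}\times N^{k_2}$ by $N^{k_2}\times N^{k_3}$ product, computable in $\tilde O(N^{\omega(k_1,k_2,k_3)}) = \tilde O(n^{k\omega(k_1,k_2,k_3)})$ arithmetic operations; the reverse inequality follows by substituting $1/k$ for $k$. For (iii), I would split the right-hand matrix of dimension $n^{k_2}\times n^{1+k_3}$ column-wise into $n^{k_3}$ blocks of size $n^{k_2}\times n$. Multiplying each block by the left matrix costs $\tilde O(n^{\omega(k_1,k_2,1)})$ arithmetic operations, so the total cost is $\tilde O(n^{k_3+\omega(k_1,k_2,1)})$, which yields the inequality.

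Part (iv) is an information-theoretic lower bound: any algorithm must at least touch the entries of both input matrices and write all entries of the output matrix, and these have sizes $n^{k_1+k_2}$, $n^{k_2+k_3}$, and $n^{k_1+k_3}$ respectively, giving $\omega(k_1,k_2,k_3)\ge \max\{k_1+k_2,k_1+k_3,k_2+k_3\}$.

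The main technical step is (ii). The transposition identity $\omega(k_1,k_2,k_3) = \omega(k_3,k_2,k_1)$ is immediate: an algorithm computing $C = AB$ can be used to compute $C^\top = B^\top A^\top$ with the same number of arithmetic operations, and this exchanges the roles of the first and third dimensions. The remaining permutations reduce to a single cyclic shift such as $\omega(k_1,k_2,k_3) = \omega(k_2,k_3,k_1)$. Here I would invoke the classical tensor-rank argument: the matrix multiplication tensor $\langle a,b,c\rangle$ and its cyclic rotation $\langle b,c,a\rangle$ have the same rank, because the trilinear form $\mathrm{tr}(ABC)$ is invariant under the cyclic permutation $(A,B,C)\mapsto(B,C,A)$; this is then lifted to the asymptotic exponents $\omega(k_1,k_2,k_3)$ in the standard way. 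Combined with the transposition symmetry above, cyclic permutations generate the full symmetric group on the three parameters, yielding (ii). The main obstacle is precisely this last step: one must pass from the bilinear-complexity statement about tensor rank to the arithmetic-complexity statement used in the definition of $\omega(k_1,k_2,k_3)$, ensuring the equivalence is preserved with rectangular parameters and that any polylogarithmic overhead is absorbed into the $\tilde O$ notation. This reduction is routine and is precisely what is covered in the cited references \cite{Burgisser+97,Huang+98}.
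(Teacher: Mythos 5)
Your proposal is correct and reproduces the standard arguments from the references \cite{Burgisser+97,Huang+98} that the paper cites in lieu of a proof; the paper itself does not spell these out. One small remark: for (iv), rather than arguing that any algorithm must ``touch'' all input entries (which requires a nondegeneracy argument to formalize in the arithmetic-circuit model), it is cleaner to observe that the output-size bound $\omega(k_1,k_2,k_3)\ge k_1+k_3$ is immediate, and the other two bounds then follow from it via the permutation symmetry (ii) you have already established.
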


\noindent{\bf Matrix products over semirings.}
We define below two matrix products over semirings considered
in Sections \ref{sec_existence} and \ref{sec_maxmin}, respectively,
additionally to the Boolean product, the $(\max,\min)$-product and the distance product defined in the introduction.
These products were also used in \cite{Duan+SODA09,Vassilevska+STOC06,Vassilevska+09}.
\begin{definition}
Let $A$ be an $n\times n$ matrix with entries in $\Int\cup\{\infty\}$
and $B$ be an $n\times n$ matrix with entries in $\Int\cup\{-\infty\}$.
The existence dominance product of $A$ and $B$, denoted $A\ast B$, is the $n\times n$ Boolean
matrix $C$ such that $C[i,j]=1$ if and only if there exists some $k\in\{1,\ldots,n\}$
such that $A[i,k]\le B[k,j]$. 
The product $A\smallerprod B$ is the $n\times n$ 
matrix $C$ such that 
$C[i,j]=-\infty$ if $A[i,k]> B[k,j]$ for all $k\in\{1,\ldots,n\}$, and 
$C[i,j]=\max_k\{A[i,k]\:|\:A[i,k]\le B[k,j]\}$ otherwise.
\end{definition}

It is easy to check, as mentioned for instance in \cite{Duan+SODA09,Vassilevska+09},
that computing the $(\max,\min)$-product reduces to computing the product $\smallerprod$.
Indeed if $C$ denotes the $(\max,\min)$-product of two matrices $A$ and $B$, then for any 
$(i,j)\in\{1,\ldots,n\}\times\{1,\ldots,n\}$ we can write
$
C[i,j]=\max\left\{(A\smallerprod B)[i,j],(B^T\smallerprod A^T)[j,i]\right\},
$ 
where $A^T$ and $B^T$ denote the transposes of $A$ and $B$, respectively.
Matrix products over the semirings $(\min,\max)$, $(\min,\le)$ and $(\max,\ge)$ 
studied, for instance, in \cite{VassilevskaPhD08}, similarly reduce to 
computing the product $\smallerprod$.\vspace{2mm}

\noindent{\bf Quantum algorithms for matrix multiplication.}
We assume that a quantum algorithm can access 
any entry of the input matrix in a random access way, similarly to the standard 
model used
in \cite{Buhrman+SODA06,Jeffery+ICALP12,LeGallSODA12,LeGallISAAC12}
for Boolean matrix multiplication. More precisely,
let $A$ and $B$ be two $n\times n$ matrices, for any positive integer $n$
(the model presented below can be 
generalized to deal with rectangular matrices in a straightforward way).
We suppose that these matrices can be accessed directly by a quantum algorithm:
We have an oracle~$O_A$ that, for any $i,j\in\{1,\ldots,n\}$, 
and any  $z\in\{0,1\}^\ast$, maps the state
$\ket{i}\ket{j}\ket{0}\ket{z}$ to $\ket{i}\ket{j}\ket{A[i,j]}\ket{z}$.
We have a similar oracle $O_B$ for $B$. 
Since we are interested in time complexity, we will count all the computational steps of the algorithm and assign a cost of one for each call 
to $O_A$ or $O_B$, which corresponds to the cases where quantum access to the inputs $A$ and $B$ can be done at 
unit cost, for example in a random access model working in quantum superposition.
We say that
a quantum algorithm for matrix multiplication computes the product of $A$ and $B$
with high probability if, 
when given access to oracles $O_A$ and $O_B$ corresponding to
$A$ and $B$, the algorithm outputs 
with probability at least $2/3$ all the entries of the product of $A$ and $B$.
The complexity of several algorithms in this paper will be stated using an upper bound $\lambda$ on the number 
of non-zero or non-infinite entries in the product of $A$ and $B$.  The same complexity, up to a logarithmic factor, can actually be obtained 
even if no nontrivial upper bound is known a priori, see \cite{LeGallSODA12,LeGallISAAC12}.

We will use variants of Grover's search algorithm, as described for instance in~\cite{Boyer+98},
to find elements satisfying some conditions inside a search space of size~$N$. Concretely, suppose that 
a Boolean function $f\colon\{1,\ldots,N\}\to\{0,1\}$ is given and that we want to find a solution, i.e., an element $x\in\{1,\ldots,n\}$ such that $f(x)=1$.
Consider the quantum search procedure (called safe Grover search in \cite{Magniez+SICOMP07}) obtained by repeating Grover's standard search 
a logarithmic number of times, and checking if a solution has been found. 
This quantum procedure outputs one solution
with probability at least $1-1/\poly(N)$ if a solution exists, 
and always rejects if no solution exists. Its time complexity
is $\tilde O(\sqrt{N/\max(1,t))})$, where $t$ denotes the number
of solutions, if the function $f$ can be evaluated in $\tilde O(1)$ time.
By repeating this procedure and striking out solutions as soon as 
they are found, one can find all the solutions with probability at least 
$1-1/\poly(N)$ using $\tilde O\big(\sqrt{N/t}+\sqrt{N/(t-1)}+\cdots+\sqrt{N/1}\big) =\tilde O(\sqrt{N(t+1)})$ computational steps.
We call this procedure \emph{quantum enumeration}.

\section{Existence Dominance Matrix Multiplication}\label{sec_existence}

In this section we present a quantum algorithm that computes the existence dominance 
product of two matrices $A$ and $B$. The underlying idea of our algorithm is similar 
to the idea in the best classical algorithm for the same problem by Duan and Pettie 
\cite{Duan+SODA09}: use a search step to find some of the entries of $A\domprod B$, 
and rely on classical algebraic algorithms to find the other entries. We naturally use
quantum search to implement the first part, and perform careful modifications of
their approach to improve the complexity in the quantum setting, taking advantage 
of the features of quantum enumeration.
There are two notable differences:  The first one is that the algebraic part of our quantum algorithms uses rectangular matrix multiplication, while \cite{Duan+SODA09} uses square matrix multiplication. 
The second and crucial difference is that, for applications in later sections, we give a quantum algorithm that can handle a more general version of the existence dominance product, defined on set of matrices, which we call the {\em generalized existence dominance product} and define below.


\begin{definition}\label{def:gene}
Let $u,v$ be two positive integers, and $S$ be the set $S=\{1,\ldots,u\}\times \{1,\ldots,v\}$.
Let $\prec$ be the lexicographic order over $S\cup\{(0,0)\}$ (i.e, $(i,j)\prec (i',j')$ if and only 
if $i<i'$ or ($i=i'$ and $j<j'$)).
Consider $u$ matrices
$A^{(1)},\ldots,A^{(u)}$, each of size $n\times n$
with entries in $\Int\cup\{\infty\}$, 
and $v$ matrices $B^{(1)},\ldots,B^{(v)}$, each of size $n\times n$
with entries in $\Int\cup\{-\infty\}$. 
For each $(i,j)\in\{1,\ldots,n\}\times\{1,\ldots,n\}$ define the set $S_{ij}\subseteq S\cup\{(0,0)\}$
as follows:
\[
S_{ij}=\{(x,y)\in S \:|\: A^{(x)}\ast B^{(y)}[i,j]=1\}\cup\{(0,0)\}.
\] 
The generalized existence dominance product of these matrices is the 
$n\times n$ matrix $C$ with entries in $S\cup\{(0,0)\}$ 
defined as follows: for all $(i,j)\in\{1,\ldots,n\}\times\{1,\ldots,n\}$ the entry 
$C[i,j]$ is the maximum element in $S_{ij}$,
where the maximum refers to the lexicographic order.
\end{definition}
Note that the case $u=v=1$ corresponds to the standard existence dominance product,
since $C[i,j]=(1,1)$ if $A^{(1)}\ast B^{(1)}[i,j]=1$ and $C[i,j]=(0,0)$ if $A^{(1)}\ast B^{(1)}[i,j]=0$. 

\begin{proposition}\label{prop_densedom}
Let $A^{(1)},\ldots,A^{(u)}$ be $u$ matrices of size $n\times n$
with entries in $\Int\cup\{\infty\}$, 
and $B^{(1)},\ldots,B^{(v)}$ be $v$ matrices of size $n\times n$
with entries in $\Int\cup\{-\infty\}$.
Let $m_1\in\{1,\ldots,n^2u\}$ denote the total number of finite entries in 
the matrices $A^{(1)},\ldots,A^{(u)}$, 
and $m_2\in\{1,\ldots,n^2v\}$ denote the total number of finite entries in 
the matrices $B^{(1)},\ldots,B^{(v)}$. 
For any parameter $t\in\{1,\ldots, m_1\}$, there exists a quantum algorithm
that computes, with high probability, 
their generalized existence dominance product in time
\[
\tilde O\left(\sqrt{\frac{m_1m_2n}{t}}+\sqrt{\frac{m_1m_2uv}{tn}}+n^{\omega(1+\log_n u,1+\log_n t,1+\log_n v)}\right).
\]
\end{proposition}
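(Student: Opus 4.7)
The plan is to follow the classical Matou{\v s}ek--Duan--Pettie rank-threshold strategy, using the parameter $t$ to split witnesses into a ``heavy'' part handled by a single rectangular matrix multiplication and a ``light'' part handled by quantum enumeration, and to extract the lexicographically maximum witness pair $(x,y)$ required by Definition \ref{def:gene} by reducing to a polylog number of detection instances.

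First, as preprocessing, for each slice index $k\in\{1,\ldots,n\}$ I would merge-sort the finite entries of the ``$A$-slice'' $\{A^{(x)}[i,k]\}_{i,x}$ together with those of the ``$B$-slice'' $\{B^{(y)}[k,j]\}_{j,y}$, storing rank tables in classical memory that the quantum algorithm can query at unit cost. This costs $\tilde O(m_1+m_2)$ and makes comparison queries, rank queries, and ``top-$t$'' queries answerable in $\polylog(n)$ time, a capability that both later phases rely on.

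Next, call a witness $(i,x,k,y,j)$ \emph{heavy} if $A^{(x)}[i,k]$ is among the $t$ smallest $A$-values of slice $k$, and \emph{light} otherwise. For the heavy part I build a matrix $\tilde A$ of shape $nu\times nt$ (rows indexed by $(i,x)$, columns by $(k,r)$ with $r\in\{1,\ldots,t\}$) with $\tilde A[(i,x),(k,r)]=1$ iff $A^{(x)}[i,k]$ is exactly the $r$-th smallest $A$-value in slice $k$, and a matrix $\tilde B$ of shape $nt\times nv$ with $\tilde B[(k,r),(j,y)]=1$ iff $B^{(y)}[k,j]\ge$ the $r$-th smallest $A$-value in slice $k$. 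Computed over $\Int$ with $\polylog$-bit arithmetic, $\tilde A\tilde B$ detects, for each $(i,x,j,y)$, the existence of a heavy witness. The lex-max heavy pair is then extracted by binary search on $x$ and $y$ through $O(\log(uv))$ further products of the same or restricted shape, each bounded above by the master product via Fact \ref{fact_RMM2}(iii); the total cost matches the third term $n^{\omega(1+\log_n u,1+\log_n t,1+\log_n v)}$.

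For the light part I would use quantum enumeration. Because a light witness forces $A^{(x)}[i,k]$ to sit above the top-$t$ smallest $A$-values of its slice, the rank structure concentrates the search spaces per slice as $t$ grows. I would iterate over slices $k$ and run two symmetric passes---one iterating over the $m_1^{(k)}$ finite $A$-entries of the slice and Grover-searching over the $m_2^{(k)}$ candidate $B$-partners, the other with the roles of $A$ and $B$ exchanged---and in each pass wrap a nested quantum maximum-finding on $(x,y)$ to return the lex-max light pair per output entry at $\polylog$ overhead. Applying Cauchy--Schwarz to balance the per-slice counts $(m_1^{(k)},m_2^{(k)})$ and selecting the better of the two passes should yield the two square-root terms $\sqrt{m_1m_2n/t}$ and $\sqrt{m_1m_2uv/(tn)}$, the first arising from the ``per-slice pair amortization'' mode and the second from the ``per-pair $(x,y)$ amortization'' mode.

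The main obstacle, in my view, is the delicate amortization of the quantum phase: a naive summation of per-slice Grover costs of the form $m_1^{(k)}\sqrt{m_2^{(k)}\lambda_k}$ loses factors of $n$ unless the per-slice sizes are rebalanced by Cauchy--Schwarz and the two symmetric iteration orders are combined to produce the claimed pair of square-root terms. A secondary subtlety is the interface between the algebraic phase, which naturally yields only a detection matrix, and the lex-max pair required by the output: here one needs each binary-search sub-product to respect the master exponent, which is precisely what Fact \ref{fact_RMM2}(iii) provides. A final minor point is that all quantum routines must have oracle access to the sorted rank tables built classically, which is compatible with the paper's standard random-access model.
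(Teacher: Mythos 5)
Your heavy/light split is defined with per-slice $A$-value ranks (``$A^{(x)}[i,k]$ is among the $t$ smallest $A$-values of slice $k$''), and this is where the approach breaks. The resulting light part has no useful dependence on $t$: in the extreme case where every $A$-value in a slice is dominated by every $B$-value in that slice, the number of light candidate pairs per slice is $\Theta(m_1^{(k)}m_2^{(k)})$ regardless of $t$ (only $t$ of the $A$-entries are declared heavy, and every remaining $A$-entry still has all $m_2^{(k)}$ dominating partners). So there is no mechanism by which increasing $t$ shrinks the quantum search space, and the two square-root terms $\sqrt{m_1m_2n/t}$ and $\sqrt{m_1m_2uv/(tn)}$ cannot be extracted by Cauchy--Schwarz rebalancing or by switching iteration orders, which you flag as the ``main obstacle'' but never resolve. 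The cut has to be made on \emph{pairs of values}, not on one side's ranks within a slice.

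What the paper does instead is sort \emph{all} finite entries of $A^{(1)},\ldots,A^{(u)}$ globally into a single list $L$ and split $L$ into $t$ consecutive buckets of size $\le\lceil m_1/t\rceil$. The ``heavy'' part (Boolean matrices $\bar A_r^{(x)},\bar B_r^{(y)}$, with $\bar B_r^{(y)}$ indicating $B$-values $\ge\max L_r$) captures cross-bucket comparisons by one rectangular Boolean product of shape $nu\times nt$ by $nt\times nv$, just as in your algebraic phase. The ``light'' part is the same-bucket comparisons: $A$-value and $B$-value both in the range of a single $L_r$. Because each $B$-value lies in the range of at most one bucket, the total number of same-bucket candidate pairs is $O(m_1m_2/t)$, which is the structure your per-slice split is missing. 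Even then the paper needs an extra ingredient, the column-balancing step of Lemma~\ref{lemma_dom}, which redistributes the same-bucket $A$-entries among virtual columns so that each column receives at most $\lceil m_1/(nt)\rceil$ entries; this is what makes the quantum enumeration amortize to $\sqrt{m_1m_2n/t}+\sqrt{m_1m_2uv/(tn)}$ with the output-count term $\min(uv, n^2)$ under the square root, and it has no counterpart in your proposal. Finally, a minor point: the binary search over $(x,y)$ you propose for extracting the lex-max is unnecessary; after one $nu\times nt$ by $nt\times nv$ product one already has all $uv$ Boolean results per $(i,j)$ and can take the lex-max by a direct $O(n^2uv)$ scan, which is dominated by the rectangular multiplication cost via Fact~\ref{fact_RMM2}(iv).
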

\begin{proof}
Let $t\in\{1,\ldots,m_1\}$ be a parameter to be chosen later.
Let $L$ be the list of all finite entries in $A^{(1)},\ldots,A^{(u)}$ sorted in increasing order.
Decompose $L$ into $t$ successive parts $L_1,\ldots,L_t$, each containing at most $\ceil{m_1/t}$
entries. For each $x\in\{1,\ldots,u\}$ and each
$r\in\{1,\ldots,t\}$ we construct two $n\times n$ matrices $A^{(x)}_r, \bar{A}^{(x)}_r$
as follows: for all $(i,j)\in\{1,\ldots,n\}\times \{1,\ldots,n\}$,
\begin{align*}
A^{(x)}_r[i,j]=&\left\{
\begin{array}{ll}
A^{(x)}[i,j] \textrm{ if }A^{(x)}[i,j]\in L_r,\\
\infty \textrm{ otherwise, }
\end{array}
\right.
&\hspace{6mm}
\bar{A}^{(x)}_r[i,j]=&\left\{
\begin{array}{ll}
1 \textrm{ if }A^{(x)}[i,j]\in L_r,\\
0 \textrm{ otherwise. }
\end{array}
\right.
\end{align*}
Similarly, for each $y\in\{1,\ldots,v\}$ and each
$r\in\{1,\ldots,t\}$ we construct two $n\times n$ matrices 
$B^{(y)}_r, \bar{B}^{(y)}_r$
as follows: for all $(i,j)\in\{1,\ldots,n\}\times \{1,\ldots,n\}$,
\begin{align*}
B^{(y)}_r[i,j]=&\left\{
\begin{array}{ll}
B^{(y)}[i,j] \textrm{ if }\min L_r\le B^{(y)}[i,j]< \max L_r,\\
-\infty \textrm{ otherwise, }
\end{array}
\right.
&\hspace{6mm}
\bar{B}^{(y)}_r[i,j]=&\left\{
\begin{array}{ll}
1 \textrm{ if }B^{(y)}[i,j]\ge \max L_r,\\
0 \textrm{ otherwise. }
\end{array}
\right.
\end{align*}
The cost of this (classical) preprocessing step is $O(n^2t(u+v))$ time.

It is easy to see that, for each $x\in\{1,\ldots,u\}$ and $y\in\{1,\ldots,v\}$, 
the following equality holds (where the operators $+$  and $\sum$ refer to the entry-wise OR):
\begin{equation}\label{eq_rel}
A^{(x)}\ast B^{(y)}=\sum_{r=1}^t\left(\bar{A}^{(x)}_r\Boolprod\bar{B}_r^{(y)}\right)+\sum_{r=1}^t\left(A^{(x)}_r\domprod B^{(y)}_r\right).
\end{equation}
Indeed, the second term compares entries that are in a same part $L_r$, while the first term takes into consideration entries in distinct parts.
Define two $n\times n$ matrices $C_1$ and $C_2$ with entries in $S\cup\{(0,0)\}$ as follows: for all $(i,j)\in\{1,\ldots,n\}\times \{1,\ldots,n\}$,
\begin{align*}
C_1[i,j]=&\max\left\{\{(0,0)\}\cup \{(x,y)\in S\:|\: \sum_{r=1}^t\bar A_r^{(x)}\Boolprod \bar B_r^{(y)}[i,j]=1\}\right\},\\
C_2[i,j]=&\max\left\{\{(0,0)\}\cup \{(x,y)\in S\:|\: \sum_{r=1}^t A_r^{(x)}\ast  B_r^{(y)}[i,j]=1\}\right\}.
\end{align*}
From Equation (\ref{eq_rel}), the generalized existence dominance product $C$ satisfies
$
C[i,j]=\max\{C_1[i,j],C_2[i,j]\}
$
for all $(i,j)\in\{1,\ldots,n\}\times \{1,\ldots,n\}$.
The matrix $C$ can then be computed in time $O(n^2)$ from $C_1$ and $C_2$.

The matrix $C_1$ can clearly be computed in time $O(n^2uv)$ if all
the terms $\sum_{r}\bar A_r^{(x)}\Boolprod \bar B_r^{(y)}$ are known.
We can obtain all these $uv$ terms by computing
the following Boolean product of an $nu\times nt$ matrix 
by an $nt\times nv$ matrix (both matrices can be constructed in time
$\tilde O(n^2t(u+v))$).
\[
\left[
\begin{array}{ccc}
\bar A^{(1)}_1&\cdots&\bar A^{(1)}_t\\
\vdots&&\vdots\\
\bar A^{(u)}_1&\cdots&\bar A^{(u)}_t
\end{array}
\right]
\cdot
\left[
\begin{array}{ccc}
\bar B^{(1)}_1&\cdots\cdots&\bar B^{(v)}_1\\
\vdots&&\vdots\\
\bar B^{(1)}_t&\cdots\cdots&\bar B^{(v)}_t\\
\end{array}
\right]
\]
The cost of this matrix multiplication is 
$\tilde O\left(n^{\omega(1+\log_n u,1+\log_n t,1+\log_n v)}\right)$.
From item (iv) of Fact \ref{fact_RMM2}, we conclude that
the matrix $C_1$ can be computed in time
\[
\tilde O\left(n^2uv+n^2t(u+v)+n^{\omega(1+\log_n u,1+\log_n t,1+\log_n v)}\right)=
\tilde O\left(n^{\omega(1+\log_n u,1+\log_n t,1+\log_n v)}\right).
\]

We use the following lemma to help us compute the matrix $C_2$.
While this is the main technical part of the proof of this proposition,
for readability its proof is placed in the appendix.
\begin{lemma}\label{lemma_dom}
There exists a quantum algorithm
that, with high probability, outputs 
\begin{itemize}
\item
$tu$ Boolean matrices $\hat A_r^{(x)}$, each of size $n\times 2n$, for all $x\in\{1,\ldots,u\}$ and $r\in\{1,\ldots,t\}$, 
\item
$tv$ Boolean matrices $\hat B_r^{(y)}$, each of size $2n\times n$, for all $y\in\{1,\ldots,v\}$ and $r\in\{1,\ldots,t\}$, 
\item
a matrix $D$ of size $n\times n$ with entries in $S\cup\{(0,0)\}=(\{1,\ldots,u\}\times \{1,\ldots,v\})\cup\{(0,0)\}$,
\end{itemize}
such that 
\[
C_2[i,j]=\max\left\{\{D[i,j]\}\cup \{(x,y)\in S\:|\: \sum_{r=1}^t\hat A_r^{(x)}\Boolprod \hat B_r^{(y)}[i,j]=1\}\right\}
\]
for all $(i,j)\in\{1,\ldots,n\}\times \{1,\ldots,n\}$.
The time complexity of this quantum algorithm is
\[
\tilde 
O\left(
n^2t(u+v)+\sqrt{\frac{m_1m_2n}{t}}+\sqrt{\frac{m_1m_2uv}{tn}}
\right).
\]
\end{lemma}
After applying the quantum algorithm of Lemma \ref{lemma_dom}, 
we can obtain the matrix $C_2$, similarly
to the computation of $C_1$, if we know 
all the terms $\sum_r\hat A_r^{(x)}\Boolprod \hat B_r^{(y)}$.
We obtain all these $uv$ terms by computing 
the following Boolean product of an $nu\times nt$ matrix 
by an $nt\times nv$ matrix.
\[
\left[
\begin{array}{ccc}
\hat A^{(1)}_1&\cdots&\hat A^{(1)}_t\\
\vdots&&\vdots\\
\hat A^{(u)}_1&\cdots&\hat A^{(u)}_t
\end{array}
\right]
\cdot
\left[
\begin{array}{ccc}
\hat B^{(1)}_1&\cdots\cdots&\hat B^{(v)}_1\\
\vdots&&\vdots\\
\hat B^{(1)}_t&\cdots\cdots&\hat B^{(v)}_t\\
\end{array}
\right]
\]
The cost of this matrix multiplication is 
$\tilde O\left(n^{\omega(1+\log_n u,1+\log_n t,1+\log_n v)}\right)$.
The total cost of computing the matrix $C_2$ is thus
\[
\tilde O\left(
n^2t(u+v)+\sqrt{\frac{m_1m_2n}{t}}+\sqrt{\frac{m_1m_2uv}{tn}}
+n^{\omega(1+\log_n u,1+\log_n t,1+\log_n v)}\right),
\]
which is the desired bound since the term 
$n^2t(u+v)$ is negligible here
by item (iv) of Fact \ref{fact_RMM2}.
\end{proof}

We can give a classical version of this result, whose proof can be found in the appendix,
that will be used to prove Theorem \ref{th_dist_cl} in Subsection \ref{subsec_dist}.
\begin{proposition}\label{prop_densedom_cl}
There exists a classical algorithm that computes the generalized existence dominance product 
in time
$
\tilde 
O\left(
\frac{m_1m_2}{tn}
+n^{\omega(1+\log_n u,1+\log_n t,1+\log_n v)}
\right),
$
for any parameter $t\in\{1,\ldots, m_1\}$.
\end{proposition}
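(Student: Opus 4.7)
My plan is to parallel the proof of Proposition~\ref{prop_densedom}, replacing its quantum subroutine by a classical counterpart while keeping every algebraic step verbatim. The preprocessing is identical: I sort the finite entries of $A^{(1)},\ldots,A^{(u)}$ and split the sorted list into $t$ successive parts $L_1,\ldots,L_t$ of size at most $\lceil m_1/t\rceil$, and I build the four families of auxiliary matrices $A^{(x)}_r,\bar A^{(x)}_r,B^{(y)}_r,\bar B^{(y)}_r$. The decomposition (\ref{eq_rel}) then splits each product $A^{(x)}\ast B^{(y)}$ into a ``different parts'' contribution and a ``same part'' contribution, yielding auxiliary matrices $C_1$ and $C_2$ whose entrywise lexicographic maximum is~$C$.

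The matrix $C_1$ is computed exactly as in the quantum proof: a single classical Boolean matrix product between the $nu\times nt$ block matrix of the $\bar A^{(x)}_r$ and the $nt\times nv$ block matrix of the $\bar B^{(y)}_r$, at cost $\tilde O(n^{\omega(1+\log_n u,1+\log_n t,1+\log_n v)})$.

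For $C_2$ I design a classical replacement for Lemma~\ref{lemma_dom}. Within a single part $r$, all finite entries of the $A^{(x)}_r$ and $B^{(y)}_r$ lie in the narrow value range $[\min L_r,\max L_r]$, so for each intermediate index $k$ I maintain sorted lists of the relevant entries of the $A$'s and of the $B$'s and enumerate all dominating pairs by a linear sweep. This produces auxiliary Boolean matrices $\hat A^{(x)}_r,\hat B^{(y)}_r$ and a direct-hit matrix $D$ whose interface with the remaining algebraic step mirrors that of Lemma~\ref{lemma_dom}; a second block Boolean product of shape $nu\times nt$ by $nt\times nv$ then assembles $C_2$ at cost $\tilde O(n^{\omega(1+\log_n u,1+\log_n t,1+\log_n v)})$.

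The main obstacle is the cost analysis of the combinatorial enumeration. A per-triple bound yields $\sum_{r,k}a_{r,k}b_{r,k}$ work (where $a_{r,k}$ and $b_{r,k}$ count the relevant $A$- and $B$-entries in part $r$ at intermediate index $k$), which in the worst case reaches $m_1m_2/t$: one factor of $n$ too large. The extra $1/n$ is recovered by applying, separately within each part $r$, a classical sparse Boolean matrix multiplication subroutine whose running time on matrices with $\alpha_r=\sum_k a_{r,k}$ and $\beta_r=\sum_k b_{r,k}$ nonzero entries is $\tilde O(\alpha_r\beta_r/n)$ below its algebraic regime; summing over $r$ and using $\max_r\alpha_r\le m_1/t$ together with $\sum_r\beta_r\le m_2$ gives $\tilde O(m_1m_2/(tn))$, which combined with the two algebraic contributions yields the claimed bound.
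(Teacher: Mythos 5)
Your outline of the algebraic skeleton is correct and faithful to the paper: same preprocessing, same splitting into parts $L_1,\ldots,L_t$, the same decomposition (\ref{eq_rel}) into $C_1$ and $C_2$, the same two $nu\times nt$ by $nt\times nv$ Boolean block products each costing $\tilde O\bigl(n^{\omega(1+\log_n u,1+\log_n t,1+\log_n v)}\bigr)$. You also correctly diagnose that a naive per-triple enumeration costs $\sum_{r,k}a_{r,k}b_{r,k}$, which can reach $m_1m_2/t$, a factor $n$ too much. The problem is the fix you propose for that gap.

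You invoke ``a classical sparse Boolean matrix multiplication subroutine whose running time on matrices with $\alpha_r$ and $\beta_r$ nonzero entries is $\tilde O(\alpha_r\beta_r/n)$ below its algebraic regime.'' No such subroutine exists, and none can: if all $\alpha_r$ nonzeros of the first factor sit in a single column and all $\beta_r$ nonzeros of the second factor sit in the matching row, the output has $\Theta(\alpha_r\beta_r)$ ones, so any algorithm needs $\Omega(\alpha_r\beta_r)$ time just to write it down, which is a factor $n$ worse than you claim. Moreover the combinatorial step here is not a Boolean product at all but the residual dominance enumeration needed to fill in $D$, so even a correct sparse Boolean MM bound would not apply directly.

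The missing idea is the one that Lemma~\ref{lemma_dom} is largely built around and that the paper's classical proof inherits: \emph{column balancing} of the $A$-side. Before any enumeration, the finite entries of $A_r^{(1)},\ldots,A_r^{(u)}$ in each original column $k$ are sorted by value and chopped into blocks of size $\lceil m_1/(nt)\rceil$, and each block is placed into its own column of an $n\times 2n$ matrix $\tilde A^{(x)}_r$, with the $\hat B$-side adjusted so that the algebraic product captures cross-block dominances. After this, $\sum_x\mathrm{col}(\tilde A_r^{(x)},k')\le\lceil m_1/(nt)\rceil$ for every column $k'$, so for each of the $m_2$ finite $B$-entries $B_r^{(y)}[k,j]$ there are at most $m_1/(nt)$ candidate $A$-rows to try, giving $\sum_{x,y}|\Gamma^{(x,y)}|=O(m_1m_2/(tn))$. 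The paper's classical proof then simply replaces the quantum enumeration in the procedure of Figure~\ref{fig:alg} by exhaustive classical enumeration over $\Gamma^{(x,y)}$, and the $1/n$ factor falls out of the balancing, not out of any sparse matrix multiplication primitive. Your proof needs to incorporate this balancing step explicitly; as written, the central cost bound is unjustified.
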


We now consider the case $u=v=1$ corresponding to the standard existence dominance product. 
By optimizing the choice of the parameter $t$ in Proposition \ref{prop_densedom}, 
we obtain the following theorem.

\begin{theorem}\label{th_densedom}
Let  $A$ be an $n\times n$ matrix with entries in $\Int\cup\{\infty\}$
containing at most $m_1$ non-($\infty$) entries, and 
$B$ be an $n\times n$ matrix with entries in $\Int\cup\{-\infty\}$
containing at most $m_2$ non-($-\infty$) entries. 
There exists a quantum algorithm that computes, with high probability,
the existence dominance product of $A$ and $B$ in time
$
\tilde O(\sqrt{m_1m_2n^{1-\mu}}),
$
where $\mu$ is the solution of the equation
$
\mu+2\omega(1,1+\mu,1)=1+\log_n(m_1m_2).
$
In particular, this time complexity is upper bounded by 
$
\tilde O\left((m_1m_2)^{1/3}n^{(\omega+1)/3}\right).
$
\end{theorem}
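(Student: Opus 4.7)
The plan is to specialize Proposition \ref{prop_densedom} to the case $u = v = 1$ and then pick the parameter $t$ optimally. Substituting $u = v = 1$ into the proposition gives a running time of $\tilde O(\sqrt{m_1 m_2 n/t} + \sqrt{m_1 m_2/(tn)} + n^{\omega(1,1+\log_n t,1)})$. The middle summand is always a factor $n$ smaller than the first, so it can be discarded; what remains is to balance the first term (decreasing in $t$) against the third (nondecreasing in $t$).

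Setting $\mu = \log_n t$, the first exponent becomes $(1 + \log_n(m_1 m_2) - \mu)/2$ and the third becomes $\omega(1, 1+\mu, 1)$. Equating the two produces exactly the equation $\mu + 2\omega(1, 1+\mu, 1) = 1 + \log_n(m_1 m_2)$ of the theorem, and at this balancing value of $\mu$ the common running time is $n^{(1 + \log_n(m_1 m_2) - \mu)/2} = \sqrt{m_1 m_2 n^{1-\mu}}$, as required. Continuity and monotonicity of $\omega(1, \cdot, 1)$, together with the endpoint values at $\mu = 0$ and $\mu = 1$, ensure that the solution lies in an admissible range of $t$ in the interesting parameter regime, with the boundary being handled by falling back to $t = 1$ or $t = m_1$.

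For the explicit cubic upper bound, the plan is to invoke the relaxation $\omega(1, 1+\mu, 1) \le \omega + \mu$, which follows from items (ii) and (iii) of Fact \ref{fact_RMM2} applied to $\omega(1,1,1+\mu) \le \omega(1,1,1) + \mu = \omega + \mu$. Substituting this linear relaxation into the balance condition and solving gives the explicit choice $\mu' = (1 + \log_n(m_1 m_2) - 2\omega)/3$. A direct calculation of the two exponents at $\mu'$ then shows that both the first and the third terms equal $(m_1 m_2)^{1/3} n^{(\omega+1)/3}$, yielding the claimed bound.

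The main obstacle I foresee is verifying that $\mu'$ lies in $[0, \log_n m_1]$ so that $t = n^{\mu'}$ is a legal parameter. The inequality $\mu' \le 1 \le \log_n m_1$ is automatic when the input is nontrivial, but $\mu' \ge 0$ requires $m_1 m_2 \ge n^{2\omega - 1}$. In the complementary sparse regime, one sets $t = 1$, and the cubic bound must be reconciled with the unavoidable $n^\omega$ term in the running time; this is the most delicate bookkeeping step, but it is routine since in that regime the first term $\sqrt{m_1 m_2 n}$ is already dominated by $n^\omega$, and the stated upper bound is meant to be read together with the trivial $\tilde O(n^\omega)$ ceiling that the algebraic step imposes.
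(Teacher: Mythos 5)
Your proposal is correct and follows essentially the same route as the paper's own proof: specialize Proposition~\ref{prop_densedom} to $u=v=1$, balance the search term against the algebraic term by choosing $t=n^\mu$, and obtain the explicit cubic bound via the relaxation $\omega(1,1+\mu,1)\le\omega+\mu$, which leads to the same choice $t=\lceil(m_1m_2)^{1/3}n^{(1-2\omega)/3}\rceil$ used in the paper. Your additional remarks about the admissible range of $t$ and the $n^\omega$ floor in the sparse regime are sound observations that the paper glosses over, but they do not change the substance of the argument.
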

\begin{proof}
The complexity of the algorithm of Proposition \ref{prop_densedom}
is minimized for $t=n^\mu$, where $\mu$ is the solution of the equation
$
\mu+2\omega(1,1+\mu,1)=1+\log_n(m_1m_2).
$
We can use items (ii) and (iii) of Fact \ref{fact_RMM2} to obtain the upper bound
$\omega(1,1+\mu,1)\le \omega+\mu$, and optimize
the complexity of the algorithm by taking
$t=\ceil{(m_1m_2)^{1/3}n^{(1-2\omega)/3}}$, which gives the upper bound claimed in the second part of the theorem.
\end{proof}

In the case of completely dense input matrices (i.e., $m_1\approx n^2$ and $m_2\approx n^2$), 
the second part of Theorem~\ref{th_densedom} shows that the complexity of the algorithm is 
$\tilde O(n^{(5+\omega)/3})\le O(n^{2.458})$. 

\section{Applications: the $(\max,\min)$-Product and the Distance Product}\label{sec_maxmin}
In this section we show how to apply the results of Section \ref{sec_existence}
to construct quantum algorithms for the $(\max,\min)$-product and the distance product.
\subsection{Quantum Algorithm for the $(\max,\min)$-Product}\label{subsec_maxmin}
In this subsection we present a quantum algorithm for the matrix product $\smallerprod$,
which immediately gives a quantum algorithm with the same complexity for the 
$(\max,\min)$-product as explained in Section \ref{prelim}, and then gives Theorem~\ref{the2}.
Our algorithm first exploits the methodology by Vassilevska et al.~\cite{Vassilevska+09} 
to reduce the computation of the product $\smallerprod$ to the computation of 
several sparse dominance products.
The main technical difficulty to 
overcome is that, unlike in the classical case, computing all the sparse dominance products 
successively becomes too costly (i.e., the cost exceeds the complexity of all the other parts 
of the quantum algorithm). 
Instead, we show that it is sufficient to obtain a small fraction of the entries in each dominance product
and that this task 
reduces to the computation of a generalized existence dominance product, and then
use the quantum techniques of Proposition \ref{prop_densedom}  to 
obtain precisely only those entries.

\begin{theorem}
There exists a quantum algorithm that computes,
for any two $n\times n$ matrices $A$ and $B$
with entries respectively in $\Int\cup\{\infty\}$ and $\Int\cup\{-\infty\}$,
the product $A\smallerprod B$ with high probability in time 
$
\tilde O(n^{(5-\gamma)/2}),
$
where $\gamma$ is the solution of the equation
$
\gamma+2\omega(1+\gamma,1+\gamma,1)=5.
$
In particular, this complexity is upper bounded by 
$
O(n^{2.473}).
$
\end{theorem}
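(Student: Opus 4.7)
The plan is to reduce $A\smallerprod B$ to a single invocation of the generalized existence dominance product of Proposition~\ref{prop_densedom}, followed by a quantum maximum-finding step in one small bucket per output entry to recover the exact value. Following the bucketing methodology of Vassilevska et al.~\cite{Vassilevska+09}, I sort the finite entries of each row of $A$ and partition them into $u=n^{\gamma}$ consecutive buckets of at most $\lceil n^{1-\gamma}\rceil$ entries each, numbered so that bucket~$x$ contains the $x$-th smallest block. I then define $A^{(x)}$ by setting $A^{(x)}[i,k]=A[i,k]$ if $A[i,k]$ lies in bucket~$x$ of row~$i$ and $A^{(x)}[i,k]=\infty$ otherwise. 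The key observation is that whenever $C[i,j]=\max_k\{A[i,k]\,|\,A[i,k]\le B[k,j]\}$ is finite, its maximizing index must lie in the \emph{largest} bucket $x_{ij}$ such that $A^{(x_{ij})}\domprod B\,[i,j]=1$: any witness in a higher-numbered bucket would be strictly larger than $C[i,j]$, contradicting maximality.

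Next I invoke Proposition~\ref{prop_densedom} with these $u=n^{\gamma}$ matrices on the $A$-side, with $v=1$ and $B^{(1)}=B$ on the $B$-side, and with the parameter choice $t=n^{\gamma}$. The output immediately yields, for each $(i,j)$, either the bucket index $x_{ij}$ or the special value $(0,0)$ signalling $C[i,j]=-\infty$. Since $m_1\le n^2$ and $m_2\le n^2$, the cost of this step is
\[
\tilde O\!\left(\sqrt{n^{5}/t}\;+\;\sqrt{n^{4}u/(tn)}\;+\;n^{\omega(1+\gamma,\,1+\gamma,\,1)}\right)=\tilde O\!\left(n^{(5-\gamma)/2}\right),
\]
where the algebraic term matches the dominant quantum search term precisely by the defining relation $\gamma+2\omega(1+\gamma,1+\gamma,1)=5$.

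For the exact values, I iterate over the pairs $(i,j)$ with $x_{ij}\neq 0$ and run quantum maximum-finding over the at most $n^{1-\gamma}$ indices $k$ contained in bucket $x_{ij}$ of row~$i$, using the function $k\mapsto A[i,k]$ if $A[i,k]\le B[k,j]$ and $-\infty$ otherwise. A single call costs $\tilde O(n^{(1-\gamma)/2})$, so this phase contributes $\tilde O(n^{2}\cdot n^{(1-\gamma)/2})=\tilde O(n^{(5-\gamma)/2})$ in total, matching the first phase. Summing, and invoking Fact~\ref{fact_RMM1} together with items~(ii) and~(iii) of Fact~\ref{fact_RMM2} to numerically estimate $\omega(1+\gamma,1+\gamma,1)$, one obtains $(5-\gamma)/2<2.473$, yielding the stated $O(n^{2.473})$ bound.

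The principal obstacle is the one flagged in the section introduction: naively running a separate existence dominance product for each of the $n^{\gamma}$ sparse matrices $A^{(x)}$ would already overshoot the budget quantumly, because even the Boolean multiplication costs stack additively per bucket. The \emph{generalized} product of Proposition~\ref{prop_densedom} is tailored precisely to absorb all $n^{\gamma}$ searches into a single quantum enumeration together with one rectangular Boolean multiplication of shape $n^{1+\gamma}\times n^{1+\gamma}\times n$, and the choice $u=t=n^{\gamma}$, $v=1$ is what balances the search, algebraic, and per-pair maximum-finding terms against one another through the equation defining $\gamma$.
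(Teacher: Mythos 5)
Your proposal is correct and follows essentially the same route as the paper's proof: bucket each row of $A$, use the generalized existence dominance product (Proposition~\ref{prop_densedom}) with $v=1$ to locate the relevant bucket for each output entry, and then run quantum maximum-finding within that bucket. The only cosmetic difference is that you fix $u=t=n^{\gamma}$ (and bucket size $n^{1-\gamma}$) up front, whereas the paper treats the number of buckets and the parameter $t$ independently and derives the relation $\delta+\gamma=1$ (hence $s=t=n^{\gamma}$) by balancing the two dominant terms.
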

\begin{proof}
Let $g\in\{1,\ldots,n\}$ be a parameter to be chosen later.
For each $i\in\{1,\ldots,n\}$, we sort the entries in the $i$-th
row of $A$ in increasing order and divide the list into  
$s=\ceil{n/g}$ successive parts $R_{1}^i,\ldots,R_s^i$ 
with at most $g$ entries in each part.
For each $r\in\{1,\ldots, s\}$, define the $n\times n$ matrix $A_r$
as follows: $A_r[i,j]=A[i,j]$ if $A[i,j]\in R_r^i$ and $A_r[i,j]=\infty$
otherwise.
The cost of this (classical) preprocessing is $O(n^2s)$ time. 

We describe below the quantum algorithm that computes $C=A\smallerprod B$.\vspace{2mm}

\noindent {\bf Step 1.} 
For each $(i,j)\in\{1,\ldots,n\}\times \{1,\ldots,n\}$, we compute the largest $r\in\{1,\ldots,s\}$ such that 
$(A_r\domprod B)[i,j]=1$, if such an $r$ exists.
This is done by using the quantum algorithm
of Proposition \ref{prop_densedom} with $u=s$, $v=1$, $A^{(r)}=A_r$ for each $r\in\{1,\ldots, s\}$ 
and $B^{(1)}=B$. Note that $m_1\le s\times (ng)=O(n^2)$ and $m_2\le n^2$. The complexity of this step 
is thus
\[
\tilde 
O\left(
\frac{n^{5/2}}{\sqrt{t}}+n^{\omega(1+\log_ns,1+\log_nt,1)}
\right)
\]
for any parameter $t\in\{1,\ldots,n^2\}$.
We want to minimize this expression.
Let us write $t=n^\gamma$ and $g=n^\delta$. For a fixed $\delta$, the first term 
is a decreasing function of $\gamma$, while the second term is an increasing function 
of $\gamma$. The expression is thus minimized for the value of $\gamma$ solution of the 
equation
\begin{equation}\label{eqdelta}
\omega(2-\delta,1+\gamma,1)=(5-\gamma)/2,
\end{equation}
in which case the expression becomes $\tilde O(n^{(5-\gamma)/2})$.
\vspace{2mm}

\noindent {\bf Step 2.} 
Note that at Step 1 we also obtain all $(i,j)\in\{1,\ldots,n\}\times \{1,\ldots,n\}$ such that no~$r$ satisfying $(A_r\ast B)[i,j]=1$
exists. For all those $(i,j)$, we set $C[i,j]=-\infty$.
For all other $(i,j)$, we will denote by $r_{ij}$ the value found at Step 1.
We now know that 
\[
C[i,j]=\max_{k :\: A[i,k]\in R_{r_{ij}}^i}\{A_{r_{ij}}[i,k]\:|\:A_{r_{ij}}[i,k]\le B[k,j]\},
\]
and $C[i,j]$ can be computed in time 
$\tilde O(\sqrt{g})$
using the quantum algorithm for maximum finding \cite{Durr+96}, since $|R_{r_{ij}}^i|\le g$. The complexity of Step 2 is
thus $\tilde O(n^2\sqrt{g})$.\vspace{2mm}

This algorithm computes, with high probability, all the entries of $C=A\smallerprod B$. Its complexity is 
\[
\tilde O\left(n^2s+n^{(5-\gamma)/2}+n^2\sqrt{g} \right)=\tilde O\left(n^{(5-\gamma)/2}+n^{2+\delta/2} \right),
\]
since the term 
$n^2s=n^{3-\delta}$ is negligible 
with respect to $n^{(5-\gamma)/2}=n^{\omega(2-\delta,1+\gamma,1)}$
by item (iv) of Fact \ref{fact_RMM2}. 
This expression is minimized for $\delta$ and $\gamma$ satisfying $\delta+\gamma=1$. Injecting this 
constraint into
Equation (\ref{eqdelta}), we find that the optimal value of $\gamma$ is the solution of the equation
$
\gamma+2\omega(1+\gamma,1+\gamma,1)=5,
$
as claimed. Using items (i) and (ii) of Fact \ref{fact_RMM2} and Fact \ref{fact_RMM1},
we obtain 
\[
5= \gamma+2(1+\gamma)\omega\!\left(\!\!1,1,\frac{1}{1+\gamma}\right) \le
\gamma+2(1+\gamma)\left(2+\beta\left(\frac{1}{1+\gamma}-\alpha\right)\!\!\right)=
(4+2\beta-2\alpha\beta)+(5-2\alpha\beta)\gamma
\]
and then
$
\gamma\ge\frac{1+2\alpha\beta-2\beta}{5-2\alpha\beta}.
$
The complexity is thus 
$\tilde O\left(n^{(12-6\alpha\beta+\beta)/(5-2\alpha\beta)}\right)\le O(n^{2.473})$.
\end{proof}

\subsection{Quantum Algorithm for the  Distance Product}\label{subsec_dist}
In this subsection we present a quantum algorithm that computes the most significant bits of the
distance product of two matrices,
as defined below.

Let $A$ and $B$ be two $n\times n$ matrices with entries in $\Int\cup\{\infty\}$.
Let $W$ be a power of two such that the value of each finite entry of their distance product $C$ is upper bounded by~$W$.
For instance, one can take the smallest power of two larger than $\max_{i,j}\{A[i,j]\}+\max_{i,j}\{B[i,j]\}$,
where the maxima are over the finite entries of the matrices.
Each non-negative finite entry of $C$ can then be expressed using $\log_2(W)$ bits:
the entry $C[i,j]$
can be expressed as 
$
C[i,j]=\sum_{k=1}^{\log_2(W)}C[i,j]_k \frac{W}{2^k}
$
for bits $C[i,j]_1,\ldots,C[i,j]_{\log_2(W)}$.
For any $\ell\in\{1,\ldots,\log_2(W)\}$, we say that an algorithm computes the $\ell$
most significant bits of each entry if, for all $(i,j)\in\{1,\ldots,n\}\times \{1,\ldots,n\}$ such 
that $C[i,j]$ is finite and non-negative, 
the algorithm outputs all the bits $C[i,j]_1,C[i,j]_2,\cdots ,C[i,j]_{\ell}$.
Vassilevska and Williams~\cite{Vassilevska+STOC06} have studied this problem, and 
shown how to reduce the computation
of the $\ell$
most significant bits to the computation of $O(2^\ell)$ existence dominance matrix products
of $n\times n$ matrices. 
By combining this with the $\tilde O(n^{(3+\omega)/2})$-time algorithm for 
dominance product from \cite{Matousek91},
they obtained
a classical algorithm that computes
the 
$\ell$
most significant bits of each entry of 
the distance product of $A$ and $B$
in time
$
\tilde O\left (2^{\ell} n^{(3+\omega)/2}\right)\le \tilde O\left (2^{\ell} n^{2.687}\right).
$

%

Here is the main result of this subsection, 
obtained by reducing 
the computation
of the $\ell$
most significant bits to computing a 
generalized existence dominance product.
\begin{theorem}\label{th_dist1}
There exists a quantum algorithm that computes,
for any two $n\times n$ matrices $A$ and $B$ with entries in $\Int\cup\{\infty\}$,
the 
$\ell$
most significant bits of each entry of 
the distance product of $A$ and $B$
in time
$
\tilde O\left(2^{0.640\ell} n^{(5+\omega)/3}\right)\le O(2^{0.640\ell} n^{2.458})
$
with high probability.
\end{theorem}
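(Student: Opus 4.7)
The plan is to reduce the problem to a single call of the generalized existence dominance product of Definition~\ref{def:gene} with $u = v = 2^{\ell/2}$, and then apply Proposition~\ref{prop_densedom} with an optimally chosen $t$.

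Write $W' = W/2^\ell$. The $\ell$ most significant bits of a finite non-negative entry $C[i,j]$ equal $\floor{C[i,j]/W'}$, which in turn equals $b^\star - 1$, where $b^\star \in \{1,\ldots,2^\ell\}$ is the smallest $b$ such that some $k$ satisfies $A[i,k]+B[k,j] < bW'$. Parameterize $b = (x-1)2^{\ell/2}+y$ with $(x,y) \in \{1,\ldots,2^{\ell/2}\}^2$, so that the usual order on $b$ coincides with the lexicographic order on $(x,y)$. I would define
\[
A^{(x)}[i,k] = A[i,k] - (x-1)\cdot 2^{\ell/2}\, W', \qquad B^{(y)}[k,j] = yW' - 1 - B[k,j],
\]
so that $A^{(x)}\ast B^{(y)}[i,j] = 1$ precisely when the threshold $bW'$ is exceeded. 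After reversing both indices (to translate lex min into lex max, as required by Definition~\ref{def:gene}), a single call of the generalized existence dominance product on the resulting $2^{\ell/2}$ matrices on each side yields $b^\star - 1$ at each entry, encoding exactly the desired $\ell$ most significant bits.

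I would now invoke Proposition~\ref{prop_densedom} with $u = v = 2^{\ell/2}$, noting $m_1, m_2 \le 2^{\ell/2} n^2$ and $uv = 2^\ell$. Let $L = \log_n 2^\ell$ and $\tau = \log_n t$. Provided $L \le 2$, the first term $\sqrt{m_1 m_2 n /t}=n^{(5+L-\tau)/2}$ dominates the second, giving the complexity
\[
\tilde O\!\left(n^{(5+L-\tau)/2} + n^{\omega(1+L/2,\,1+\tau,\,1+L/2)}\right).
\]

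The main technical hurdle is bounding $\omega(1+L/2,1+\tau,1+L/2)$ sharply enough: the direct bound $\omega + L + \tau$ from Fact~\ref{fact_RMM2}(iii) would only produce coefficient $2/3$ in the exponent of $L$. To extract the improvement, I would use the scaling identity of Fact~\ref{fact_RMM2}(i) to rewrite
\[
\omega(1+L/2,\,1+\tau,\,1+L/2) = (1+L/2)\cdot\omega\!\left(1,\,\frac{1+\tau}{1+L/2},\,1\right),
\]
and then bound $\omega(1, k, 1)$ via Fact~\ref{fact_RMM1} when $k = (1+\tau)/(1+L/2) \le 1$, or via Fact~\ref{fact_RMM2}(iii) when $k>1$. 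In both regimes this yields the uniform upper bound
\[
\omega(1+L/2,\,1+\tau,\,1+L/2) \le \omega + \frac{2-\alpha\beta}{2}\,L + \tau.
\]
Balancing with $(5+L-\tau)/2$ gives $\tau^\star = (5 - 2\omega - (1-\alpha\beta)L)/3$ and final exponent $(5+\omega)/3 + (4-\alpha\beta)L/6$. Inserting $\omega < 2.373$ and $\alpha > 0.302$ (so $\alpha\beta > 0.161$) yields coefficient $< 0.640$, and since $n^{0.640L} = 2^{0.640\ell}$ the claimed bound follows.
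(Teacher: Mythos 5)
Your argument matches the paper's proof of Theorem~\ref{th_dist1} essentially step for step: the same shift matrices (yours differ by a $-1$ offset, which lets you use the non-strict $\ast$ where the paper invokes the strict variant), the same reduction to one generalized existence dominance product with $u=v=2^{\ell/2}$ (reversing indices rather than modifying $\prec$, which are equivalent), and the identical bound $\omega(1+L/2,1+\tau,1+L/2)\le\omega+\tfrac{2-\alpha\beta}{2}L+\tau$ giving the same optimal $\tau$ and exponent $\tfrac{5+\omega}{3}+\tfrac{4-\alpha\beta}{6}L$. The only items you elide that the paper spells out are the without-loss-of-generality reduction justifying the assumption on $\ell$ (via the trivial $\tilde O(n^{5/2})$ quantum algorithm, which is actually needed to ensure $\tau^\star\ge 0$, a stronger constraint than your stated $L\le 2$) and the short discussion of negative or infinite entries of the distance product.
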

\begin{proof}
Note that the trivial $\tilde O(n^{5/2})$-time quantum algorithm can be used to compute 
all the bits of each entry of the distance product $C$ of $A$ and $B$. Therefore, we will assume,
without loss of generality, that $\ell$ satisfies the inequality $2^{0.640\ell} n^{(5+\omega)/3}\le n^{5/2}$,
which implies in particular that $2^\ell\le n^2$.

Assume first that all the entries of $C$ are finite and non-negative.
What we want to do is to compute, for each $(i,j)\in\{1,\ldots,n\}\times \{1,\ldots,n\}$,
the integer
$d\in\{0,1\ldots, 2^\ell-1\}$
such that $C[i,j]$ is in the interval $[dW/2^\ell,(d+1)W/2^\ell)$. 

For any integer $x$, define the matrices $A'_x$ and $B'_x$ as follows:
for all $(i,j)\in\{1,\ldots,n\}\times \{1,\ldots,n\}$,
\begin{align*}
A'_x[i,j]&=A[i,j]-\frac{xW}{2^{\ell/2}},
\hspace{20mm}B'_x[i,j]=-B[i,j]+\frac{xW}{2^{\ell}}.
\end{align*}

Assume for simplicity that $\ell$ is even (a similar argument works for $\ell$ odd).
For each $d\in\{0,1\ldots, 2^\ell-1\}$, let $d_1,d_2\in\{0,1\ldots, 2^{\ell/2}-1\}$ denote the values  
such that $d=d_12^{\ell/2}+d_2$.
For each $d\in\{0,1\ldots, 2^\ell-1\}$, define the Boolean matrix 
$
D_{d}=A'_{d_1}\ast B'_{d_2},
$ 
where $\ast$ means the strict\footnote{The strict existence dominance product is obtained by replacing $\le$ by $<$ in the definition of the existence dominance product (Definition 2.1). Note that all our results on the existence dominance product also hold for the strict existence dominance product and their proofs are essentially the same, just replacing inequalities by strict inequalities.} existence dominance product. 
Note that 
$
\frac{d_1W}{2^{\ell/2}}+\frac{d_2W}{2^{\ell}}=\frac{dW}{2^{\ell}}.
$
Observe that, for each $(i,j)\in\{1,\ldots,n\}\times \{1,\ldots,n\}$, we have 
\[
D_d[i,j]=0 \Longleftrightarrow\min_{k}\left(A[i,k]+B[k,j]\right)\ge \frac{dW}{2^{\ell}}.
\]

For each $(i,j)\in\{1,\ldots,n\}\times \{1,\ldots,n\}$,  
the integer $d\in\{1,\ldots, 2^\ell\}$
such that $C[i,j]$ is in the interval $[(d-1)W/2^\ell,dW/2^\ell)$
can thus be found by computing the smallest $d\in\{0,1\ldots, 2^\ell-1\}$
such that $D_d[i,j]=1$. We can thus use\footnote{Actually, we need to modify the
order $\prec$ 
in Definition \ref{def:gene} so that the algorithm of
Proposition \ref{prop_densedom}
finds the smallest $d$ such that $D_d[i,j]=1$ instead of the largest $d$.
This is done simply by choosing $\prec$ as the decreasing lexicographic order
instead of the usual lexicographic order.
Proposition \ref{prop_densedom} and its proof are unchanged, since the proof
only uses the fact that $\prec$ is a strict total order.} 
the quantum algorithm of 
Proposition \ref{prop_densedom}, with $u=v=2^{\ell/2}$, 
$A^{(x)}=A'_{x-1}$ and $B^{(y)}=B'_{y-1}$ for each $x,y\in\{1,\ldots,2^{\ell/2}\}$.
Since $m_1\le 2^{\ell/2}n^2$, $m_2\le 2^{\ell/2}n^2$ and from the inequality $2^\ell\le n^2$ on $\ell$, the complexity is 
\[
\tilde 
O\left(
\frac{n^{5/2}2^{\ell/2}}{\sqrt{t}}
+n^{\omega(1+\log_n(2^{\ell/2}),1+\log_n t,1+\log_n(2^{\ell/2}))}
\right)
\]
for any parameter $t\in\{1,\ldots,2^{\ell/2}n^2\}$.
Let us write $\mu=\log_n(2^\ell)$ and $t=n^\gamma$. 
The complexity is minimized for the 
value $\gamma$ such that
$
5+\mu-\gamma=2\omega(1+\mu/2,1+\gamma,1+\mu/2),
$
for which the complexity is $\tilde O\left(n^{(5+\mu-\gamma)/2}\right)$.
Using items (i) and (iii) of Fact \ref{fact_RMM2} and Fact \ref{fact_RMM1}, we obtain
\[
\omega(1+\mu/2,1+\gamma,1+\mu/2)\le \gamma+(1+\mu/2)\omega\left(\!\!1,\frac{1}{1+\mu/2},1\right)\le \gamma+(1+\mu/2)\left(2+\beta\left(\frac{2}{2+\mu}-\alpha\right)\right).
\]
This gives
$
5+\mu-\gamma\le 2\gamma+(2-\alpha\beta)\mu+(4+2\beta- 2\alpha\beta)
$
and thus 
$
\gamma\ge \frac{(\alpha\beta-1)\mu+(1-2\beta+2\alpha\beta)}{3}.
$
The complexity is thus
$
\tilde O\left(
n^{\frac{5}{2}+\frac{2\beta-2\alpha\beta-1}{6}+\frac{(4-\alpha\beta)}{6}\mu}
\right)=
\tilde O\left(
n^{\frac{5+\omega}{3}+0.640\mu}
\right)
=
O\left(
2^{0.640\ell}n^{2.458}
\right).
$

Finally, we discuss the general case
where the entries of $C$ can be negative or infinite.
Observe that the above algorithm 
detects which entries of $C$ are larger than
$(2^\ell-1)W/2^\ell$: 
these are the entries such that the algorithm 
finds no $d$ such that $D_d[i,j]=1$.
We can find which of these entries are larger than $W$
(and thus infinite) by computing the dominance product
$
A'_{0}\ast B'_{2^\ell}.
$ 
Note that the algorithm also finds which entries of $C$ are negative: 
these are the entries for which the 
smallest $d$ such that $D_d[i,j]=1$ is $d=0$. 
\end{proof}


Similarly, we can obtain a better classical algorithm as shown in the following theorem.
\begin{theorem}\label{th_dist_cl}
There exists a classical algorithm that computes,
for any two $n\times n$ matrices $A$ and $B$ with entries in $\Int\cup\{\infty\}$,
the 
$\ell$
most significant bits of each entry of 
the distance product of $A$ and $B$
in time
$
\tilde O\big( 2^{0.960\ell}n^{(3+\omega)/2}\big)\le O(2^{0.960\ell} n^{2.687}).
$
\end{theorem}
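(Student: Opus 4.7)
The plan is to mirror the proof of Theorem \ref{th_dist1} essentially verbatim, replacing the single quantum call to Proposition \ref{prop_densedom} with a classical call to Proposition \ref{prop_densedom_cl}. Concretely, I would reduce the computation of the $\ell$ most significant bits of each entry of the distance product $C$ of $A$ and $B$ to the computation of a single generalized existence dominance product, exactly as in the quantum algorithm: assuming $\ell$ is even (the odd case being analogous), define the shifted matrices $A'_x[i,j] = A[i,j] - xW/2^{\ell/2}$ and $B'_y[i,j] = -B[i,j] + yW/2^\ell$, and take $u = v = 2^{\ell/2}$, $A^{(x)} = A'_{x-1}$, $B^{(y)} = B'_{y-1}$. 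The smallest $d = d_1 2^{\ell/2} + d_2$ such that the strict existence dominance product $A'_{d_1} \ast B'_{d_2}$ has entry $(i,j)$ equal to $1$ identifies the interval $[dW/2^\ell,(d+1)W/2^\ell)$ containing $C[i,j]$; this is obtained from the generalized existence dominance product by ordering the index set with the decreasing lexicographic order (as noted in the footnote of Theorem \ref{th_dist1}). Handling of negative and infinite entries, as well as the assumption $2^\ell \le n^2$, carries over unchanged.

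Since $m_1 \le 2^{\ell/2}n^2$ and $m_2 \le 2^{\ell/2}n^2$, applying Proposition \ref{prop_densedom_cl} with parameter $t = n^\gamma$ and writing $\mu = \log_n(2^\ell)$ yields the complexity
\[
\tilde O\!\left(\frac{m_1 m_2}{tn} + n^{\omega(1+\mu/2,\,1+\gamma,\,1+\mu/2)}\right) = \tilde O\!\left(n^{3+\mu-\gamma} + n^{\omega(1+\mu/2,\,1+\gamma,\,1+\mu/2)}\right).
\]
I would then balance the two terms by choosing $\gamma$ satisfying $3 + \mu - \gamma = \omega(1+\mu/2, 1+\gamma, 1+\mu/2)$, in which case the overall cost is $\tilde O(n^{3+\mu-\gamma})$.

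Next I would bound the rectangular matrix multiplication exponent using items (i) and (iii) of Fact \ref{fact_RMM2} together with Fact \ref{fact_RMM1}:
\[
\omega(1+\mu/2,\,1+\gamma,\,1+\mu/2) \le \gamma + (1+\mu/2)\,\omega\!\left(1,\tfrac{1}{1+\mu/2},1\right) \le \gamma + (1+\mu/2)\!\left(2 + \beta\!\left(\tfrac{2}{2+\mu}-\alpha\right)\!\right),
\]
which after simplification yields $\gamma \ge (1-\beta+\alpha\beta)/2 + \alpha\beta\mu/4$. Substituting back gives the complexity
\[
\tilde O\!\left(n^{(5+\beta-\alpha\beta)/2 \,+\, (4-\alpha\beta)\mu/4}\right).
\]
Plugging in the best known bounds $\alpha > 0.302$ and $\omega < 2.373$ (so $\beta < 0.534$), one checks that $(5+\beta-\alpha\beta)/2 \le (3+\omega)/2 \le 2.687$ and $(4-\alpha\beta)/4 \le 0.960$, which gives the claimed bound $\tilde O(2^{0.960\ell}n^{(3+\omega)/2})$.

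There is no genuine conceptual obstacle here, as the reduction to the generalized existence dominance product is identical to the quantum case; the only care required is the arithmetic of the balancing step, which is completely analogous to the computation in the proof of Theorem \ref{th_dist1} with the quantum ``square root'' term $n^{5/2}2^{\ell/2}/\sqrt{t}$ replaced by the classical term $2^\ell n^3/t$. The final comparison simply reflects the fact that the classical enumeration cost scales as $m_1 m_2 / (tn)$ rather than $\sqrt{m_1 m_2 n / t}$, which pushes the exponent of $n$ up from $(5+\omega)/3$ to $(3+\omega)/2$ and the exponent of $2^\ell$ up from $0.640$ to $0.960$.
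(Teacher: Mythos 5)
Your proposal matches the paper's proof essentially verbatim: the reduction to a single generalized existence dominance product with $u=v=2^{\ell/2}$, the application of Proposition~\ref{prop_densedom_cl} in place of Proposition~\ref{prop_densedom}, the balancing of $n^{3+\mu-\gamma}$ against $n^{\omega(1+\mu/2,1+\gamma,1+\mu/2)}$, and the bound on the rectangular exponent via Facts~\ref{fact_RMM1} and~\ref{fact_RMM2} leading to $\gamma\ge (1-\beta+\alpha\beta)/2+\alpha\beta\mu/4$ and the final exponent $(5+\beta-\alpha\beta)/2+(1-\alpha\beta/4)\mu=(3+\omega)/2+0.960\mu$ are all identical to the paper's argument.
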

\begin{proof}
The proof is similar to the proof of Theorem \ref{th_dist1}, but we use Proposition \ref{prop_densedom_cl} instead of Proposition \ref{prop_densedom}.
The complexity becomes 
\[
\tilde 
O\left(
\frac{2^{\ell}n^{3}}{t}
+n^{\omega(1+\log_n(2^{\ell/2}),1+\log_n t,1+\log_n(2^{\ell/2}))}
\right)
\]
for any parameter $t\in\{1,\ldots,2^{\ell/2}n^2\}$.
Let us write $\mu=\log_n(2^\ell)$ and $t=n^\gamma$. 
This expression is then  
\[
O\left(
n^{3+\mu-\gamma}
+n^{\omega(1+\mu/2,1+\gamma,1+\mu/2)}
\right).
\]
This expression is minimized for the 
value $\gamma$ such that
\[
3+\mu-\gamma=\omega(1+\mu/2,1+\gamma,1+\mu/2),
\]
for which the complexity is $\tilde O\left(n^{3+\mu-\gamma}\right)$.

Using items (i) and (iii) of Fact \ref{fact_RMM2} and Fact \ref{fact_RMM1}, we obtain
\[
\omega(1+\mu/2,1+\gamma,1+\mu/2)\le \gamma+(1+\mu/2)\omega\left(1,\frac{1}{1+\mu/2},1\right)
\le\gamma+(1+\mu/2)\left(2+\beta\left(\frac{2}{2+\mu}-\alpha\right)\right).
\]
This gives the inequality
\[
3+\mu-\gamma\le \gamma+(1-\frac{\alpha\beta}{2})\mu+(2+\beta- \alpha\beta),
\]
from which we obtain 
\[
\gamma\ge \frac{\alpha\beta\mu/2+(1-\beta+\alpha\beta)}{2}.
\]
The complexity is thus 
\[
\tilde O\left(
n^{\frac{(5+\beta-\alpha\beta)}{2}+(1-\frac{\alpha\beta}{4})\mu}
\right)=
\tilde O\left(
n^{\frac{3+\omega}{2}+0.960\mu}
\right)
=
O\left(
2^{0.960\ell} n^{2.687}
\right),
\]
as claimed.
\end{proof}

Note that the dependency on $n$ of the $\tilde O\left (2^{\ell} n^{2.687}\right)$-time
algorithm by Vassilevska and Williams~\cite{Vassilevska+STOC06}
can be slightly improved using the recent
$O(n^{2.684})$-time algorithm for 
dominance product by Yuster~\cite{YusterSODA09} based on rectangular matrix multiplication.
We can similarly obtain an improved bound $O(2^{c\ell} n^{2.684})$, for some $c<1$, 
with the same approach as in the proof of Theorem \ref{th_dist_cl}. However, it is complicated to express
the value of $c$ in a closed form, so we omit the statement of this slight improvement.


\section{Sparse Boolean Matrix Multiplication}\label{sec_red}
In this section we describe quantum versions of several known combinatorial techniques for handling sparse
Boolean matrix products. 
The main result is the following theorem, which shows how to compute the Boolean product of two matrices $A$ and $B$ 
by reducing it to four products, each easier to compute than the original one when $A$ and $B$ are sparse enough. 
Note that similar ideas have been used in \cite{Amossen+09,Yuster+05} to analyze applications of those combinatorial techniques in the classical setting. Here we show how to implement these ideas using quantum enumeration and analyze the complexity of the resulting algorithm.
\begin{theorem}\label{prop}
Assume that there exists an algorithm that computes, 
in time $M(n_1,n_2,n_3,L)$,
the product of any $n_1\times n_2$ Boolean matrix and 
any $n_2\times n_3$ Boolean matrix such that their product  
contains at most $L$ non-zero entries. 
Let $A$ and $B$ be two $n\times n$ Boolean matrices with at most $m_1$ and $m_2$ non-zero entries in $A$ and~$B$,
respectively. For any values of the three parameters $\ell_1\in\{1,\ldots,m_1\}$ and $\ell_2,\ell_3\in\{1,\ldots,m_2\}$,
there exists a quantum algorithm that computes, with high probability,
the Boolean product $A\Boolprod B$ and has time complexity
\[
\tilde O\Bigg(M(\ell'_1,\ell'_2,\ell'_3,\lambda)+\sqrt{\frac{m_1m_2\cdot \min(\lambda,m_1m_2/\ell_2)}{\ell_2}}+\lambda\sqrt{\frac{m_1}{\ell_1}}+
\lambda\sqrt{\frac{m_2}{\ell_3}}+n^2\Bigg),
\] 
where $\lambda$ denotes the number of non-zero entries in $A\Boolprod B$, and $\ell'_i=\min(\ell_i,n)$ for each $i\in\{1,2,3\}$.
\end{theorem}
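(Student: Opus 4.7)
The strategy will adapt the classical sparse Boolean matrix multiplication techniques of~\cite{Amossen+09,Yuster+05} by inserting quantum enumeration at each place where the classical algorithm performs a combinatorial search. The three parameters $\ell_1,\ell_2,\ell_3$ correspond to three density thresholds---one for rows of~$A$, one for the middle index~$k$, and one for columns of~$B$---so that the ``dense-dense-dense'' part of the product reduces to a small algebraic product while every other case admits a sparsity-exploiting quantum search.

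I would first preprocess in time $\tilde O(n^2)$ to classify each row, column, and middle index as \emph{heavy} or \emph{light}. Call a row $i$ of $A$ heavy if $|A[i,\cdot]|>m_1/\ell_1$; call a middle index $k$ heavy if both $|A[\cdot,k]|>m_1/\ell_2$ and $|B[k,\cdot]|>m_2/\ell_2$ hold; and call a column $j$ of $B$ heavy if $|B[\cdot,j]|>m_2/\ell_3$. By averaging, these counts are bounded by $\ell_1,\ell_2,\ell_3$ respectively, and capping at $n$ gives $\ell'_1,\ell'_2,\ell'_3$.

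The computation of $C=A\Boolprod B$ then splits along a case analysis on a $1$-entry $(i,j)$ and a witness $k$ satisfying $A[i,k]=B[k,j]=1$: case~(i) all three of $i$, $k$, $j$ are heavy; case~(ii) some witness $k$ is a light middle index; case~(iii) $i$ is a light row of $A$; case~(iv) $j$ is a light column of $B$. Case~(i) is captured by the sub-product of the matrices restricted to the heavy rows of $A$, the heavy middle indices, and the heavy columns of $B$; its input dimensions are at most $\ell'_1\times\ell'_2$ and $\ell'_2\times\ell'_3$ and its output has at most $\lambda$ ones, so invoking the assumed subroutine gives the $M(\ell'_1,\ell'_2,\ell'_3,\lambda)$ term. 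Case~(ii) is handled by one quantum enumeration over the triples $(i,k,j)$ with $A[i,k]=B[k,j]=1$ and $k$ a light middle index: the total number of such triples is at most $m_1m_2/\ell_2$ (whichever of $|A[\cdot,k]|$ and $|B[k,\cdot]|$ lies below its threshold supplies the bound, after summing against the other matrix), and with deduplication of $(i,j)$ pairs this costs $\tilde O(\sqrt{(m_1m_2/\ell_2)\cdot \min(\lambda,m_1m_2/\ell_2)})$. Cases~(iii) and~(iv), for the at most $\lambda$ candidate positions $(i,j)$ still unresolved after (i) and (ii), are handled by quantum-searching the $\le m_1/\ell_1$ nonzeros of row $i$ of $A$ (respectively the $\le m_2/\ell_3$ nonzeros of column $j$ of $B$) for a matching $k$, contributing $\tilde O(\lambda\sqrt{m_1/\ell_1})$ and $\tilde O(\lambda\sqrt{m_2/\ell_3})$. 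The $n^2$ term just covers output listing.

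The main obstacle is orchestrating the four subroutines so that every true $1$-entry of $C$ is discovered by at least one of them while ensuring that cases~(iii) and~(iv) range over only $\lambda$ candidate positions rather than all $n^2$. A natural order is to run (i) and (ii) first and record the detected positions in a quantum-accessible data structure in the spirit of~\cite{LeGallSODA12,LeGallISAAC12}; then (iii) and (iv) can be restricted to positions whose status remains undetermined, of which there are at most $\lambda$ since each true $1$-entry is eventually confirmed. Since $\lambda$ is not known a priori, wrapping a standard doubling guess around the algorithm adds only polylogarithmic overhead, absorbed in the $\tilde O(\cdot)$.
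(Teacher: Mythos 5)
Your four-way decomposition (heavy-heavy-heavy goes to the algebraic subroutine; light middle index, light $A$-row, light $B$-column each handled by search) matches the paper's $A\Boolprod B=A_T^S\Boolprod B_S^U+A^{S'}\Boolprod B_{S'}+A_{T'}\Boolprod B+A\Boolprod B^{U'}$, and your treatment of cases (i) and (ii) is essentially the paper's. The gap is in cases (iii) and (iv). You assert that after running (i) and (ii) there remain ``at most $\lambda$ candidate positions $(i,j)$'' and that (iii)/(iv) can be restricted to them. That is not true in the needed sense: after (i) and (ii) you know a set $\Sigma$ of confirmed $1$-entries, but the positions \emph{outside} $\Sigma$ number up to $n^2-|\Sigma|$, and you cannot tell which of those are $0$s and which are as-yet-undetected $1$s without actually checking them. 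There is no data structure in the spirit of~\cite{LeGallSODA12,LeGallISAAC12} that shrinks this $n^2$-size search space to size $\lambda$ for free. If you instead run a quantum enumeration over all $n^2$ positions, checking each by Grover-searching the $\le\min(m_1/\ell_1,n)$ nonzeros of row $i$, the cost is $\tilde O\bigl(n\sqrt{\lambda}\cdot\sqrt{\min(m_1/\ell_1,n)}\bigr)$, which exceeds the claimed $\tilde O\bigl(n^2+\lambda\sqrt{m_1/\ell_1}\bigr)$ in a nontrivial regime (e.g.\ $\lambda=n^{3/2}$, $m_1/\ell_1=n$ gives $n^{2.25}$ vs.\ $n^2$).

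The paper avoids this by enumerating over \emph{witnesses} rather than over positions. To compute $A_{T'}\Boolprod B$, it iterates over the middle index $k$, builds the sets $E_k$ (rows with a nonzero in column $k$ of $A_{T'}$) and $F_k$ (columns with a nonzero in row $k$ of $B$), and runs a quantum enumeration over $E_k\times F_k$ while deduplicating against a running set $\Sigma$. The total size of the search spaces is $\sum_k a'_k b^R_k$, which the paper bounds by $\lambda\cdot\min(m_1/\ell_1,n)$ because every witness triple $(i,j,k)$ satisfies $i\in T'$ and so contributes at most $\min(m_1/\ell_1,n)$ witnesses per $1$-entry $(i,j)$. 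Cauchy--Schwarz on $\sum_k\sqrt{a'_kb^R_k(\lambda_k+1)}$ then yields $\tilde O(n^2+\lambda\sqrt{m_1/\ell_1})$. This witness-counting bound and the per-$k$ (or globally indexed) enumeration are the missing ingredients; your per-position search cannot reproduce the claimed term without first knowing where the unconfirmed $1$-entries are.
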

\begin{proof}
For any $k\!\in\!\{1,\ldots,n\}$, 
let $a^R_k$ (resp.~$b^R_k$) be the number of non-zero entries in the $k$-th row of $A$ (resp.~$B$)
and
$a^C_k$ (resp.~$b^C_k$) be the number of non-zero entries in the $k$-th column of $A$ (resp.~$B$).
We define the following six sets of indexes, and compute them classically in time $O(n^2)$.
\begin{align*}
S=&\left\{k\in\{1,\ldots,n\}\:|\:b^R_k\ge m_2/\ell_2\right\}&\hspace{5mm}
S'=&\left\{k\in\{1,\ldots,n\}\:|\:b^R_k< m_2/\ell_2\right\}\\
T=&\left\{k\in\{1,\ldots,n\}\:|\:a^R_k\ge m_1/\ell_1\right\}&
T'=&\left\{k\in\{1,\ldots,n\}\:|\:a^R_k< m_1/\ell_1\right\}\\
U=&\left\{k\in\{1,\ldots,n\}\:|\:b^C_k\ge m_2/\ell_3\right\}&
U'=&\left\{k\in\{1,\ldots,n\}\:|\:b^C_k< m_2/\ell_3\right\}
\end{align*}

Given two sets $R,C\subseteq\{1,\ldots,n\}$ and an $n\times n$ Boolean matrix $M$, the notation
$M_R^C$ will represent the $n\times n$ Boolean matrix such that $M^C_R[i,j]=1$ if and only if $M[i,j]=1$ and $(i,j)\in R\times C$.
For convenience, $M_R$ will represent the matrix $M_R^C$ for $C=\{1,\ldots, n\}$,
and $M^C$ the matrix $M_R^C$ for $R=\{1,\ldots, n\}$.

It is easy to check that
\[
A\Boolprod B=A_T^S\Boolprod B_S^U+A^{S'}\Boolprod B_{S'}+A_{T'}\Boolprod B +A\Boolprod B^{U'},
\]
where $+$ represents the entry-wise OR operation. We will individually compute the four terms of this sum.

The computation of $A_T^S\Boolprod B_S^U$ consists in the computation of a $|T|\times |S|$ matrix by a $|S|\times |U|$ matrix.
We implement this part using the algorithm whose existence is assumed in the statement of the theorem.
Note that, from the 
sparsity of $B$, we have 
$
m_2\ge \sum_{k=1}^n b^R_k\ge \sum_{k\in S} b^R_k\ge |S|m_2/ \ell_2,
$
and thus $|S|\le\ell_2$. Similarly we have $|T|\le \ell_1$ and $|U|\le \ell_3$.
Additionally, we know that $S$, $T$ and $U$ have size at most $n$.
Thus this part can be implemented in
$
M(\ell'_1,\ell'_2,\ell'_3,\lambda)
$
time.

In order to compute $A^{S'}\Boolprod B_{S'}$ we do the following. 
Let $m'_1$ denote the number of non-zero entries of $A^{S'}$.
First, we list all these non-zero entries, classically in time $O(n^2)$, and 
record them into two arrays $\arrayf{M_1}$ and $\arrayf{M_2}$ of size $m'_1$:
for each $p\in\{1,\ldots,m'_1\}$ the value $\arrayf{M_1}[p]$ records the row index of the 
$p$-th element of the list, while $\arrayf{M_2}[p]$ records its column index. 
Then, for each $k\in S'$, we compute the set of indexes $j\in\{1,\ldots,n\}$ such that 
$B[k,j]=1$ and record them into an array $\arrayf{N_k}$. Note that $\arrayf{N_k}$ 
has length $b_k^R$, and that $b_k^R<m_2/\ell_2$ from the definition of $S'$. 
The computation of all the $\arrayf{N_k}$'s can be done classically in $O(n^2)$ time. 
Finally, take $N=\sum_{c=1}^{m'_1} b^R_{\arrayf{M_2}[c]}$ and define the 
function $g\colon\{1,\ldots,N\}\to \{1,\ldots,n\}\times \{1,\ldots,n\}$ as follows:
for any $p\in\{1,\ldots,m'_1\}$ and any $q\in \{1,\ldots,b^R_{\arrayf{M_2}[p]}\}$,
\[
g\left(q+\sum_{c=1}^{p-1}b^R_{\arrayf{M_2}[c]}\right)=(\arrayf{M_1}[p],\arrayf{N}_{\arrayf{M_2}[p]}[q]),
\]
where $\arrayf{N}_{\arrayf{M_2}[p]}[q]$ denotes the $q$-th element of the array $\arrayf{N}_{\arrayf{M_2}[p]}$.
It is easy to check that
\[
g(\{1,\ldots,N\})=
\left\{(i,j)\in\{1,\ldots,n\}\times \{1,\ldots,n\}\:|\: \textrm{there exists $k\in S'$ such that }A[i,k]=B[k,j]=1\right\},
\]
i.e., $g(\{1,\ldots,N\})$ is precisely the set of non-zero entries of $A^{S'}\Boolprod B_{S'}$ that we want to find.
A crucial point 
here is that the function $g$ can be evaluated in $\poly(\log n)$
time using the data structures $\arrayf{M_1}$, $\arrayf{M_2}$ and $\arrayf{N_k}$.
For any subset $\Sigma$ of $\{1,\ldots,n\}\times \{1,\ldots,n\}$,
let $f_\Sigma\colon\{1,\ldots,N\}\to\{0,1\}$ be the function such that $f_\Sigma(x)=1$ if
and only if $g(x)\notin \Sigma$.
The quantum procedure starts with $\Sigma$ being empty,
performs successive quantum searches over $\{1,\ldots,N\}$, 
each time searching for an element $x$ such that $f_\Sigma(x)=1$
and adding $g(x)$ to $\Sigma$ as soon as such an $x$ is found,
and stops when no new element $x$ is found. 
From the discussion of Section \ref{prelim}, with high probability 
all searches succeed, in which case at the end of the procedure
$\Sigma=g(\{1,\ldots,N\})$.
Let $\lambda'$ denote the number of non-zero entries in $A^{S'}\Boolprod B_{S'}$ and observe that
$\lambda' \le \min(\lambda,m_1m_2/\ell_2)$, since $N<m_1'm_2/\ell_2\le m_1m_2/\ell_2$.
The overall complexity of this quantum procedure is
\begin{equation*}
\tilde O\left(n^2+\sqrt{N\times (\lambda'+1)}\right)=
\tilde O\left(n^2+\sqrt{\frac{m_1m_2\cdot\min(\lambda,m_1m_2/\ell_2)}{\ell_2}}\right).
\end{equation*}


The computation of $A_{T'}\Boolprod B$ is done as follows. For each $k\in\{1,\ldots, n\}$, let~$a'_k$ denote the number of non-zero entries in the $k$-th column of $A_{T'}$. 
We first perform a $O(n^2)$-time classical preprocessing step: for each $k\in\{1,\ldots,n\}$, we construct the set 
$E_k$ of the row indexes of all non-zero entries in the $k$-th column of $A_{T'}$,
and construct the set $F_k$ of the column indexes of all non-zero entries in the $k$-th row of $B$.
Note that $|E_k|=a'_k$ and $|F_k|=b^R_k$.
The quantum procedure computing 
$A_{T'}\Boolprod B$
uses a set $\Sigma\subseteq \{1,\ldots,n\}\times \{1,\ldots,n\}$, initially empty.
For each $k\in\{1,\ldots,n\}$,
all the $(i,j)\in E_k\times F_k$ such that $A_{T'}[i,k]=B[k,j]=1$ and $(i,j)\notin \Sigma$ are computed by 
performing a quantum enumeration, as above, over
%
the set $E_k\times F_k$, 
adding $(i,j)$ to $\Sigma$ as soon as such a $(i,j)$ is found,
and stopping when no new element $(i,j)$ is found.
The overall time complexity is   
$
\tilde O\left(n^2+\sum_{k=1}^n \sqrt{a'_kb^R_k(\lambda_k+1)}\right),
$
where $\lambda_k$ is the number of elements found 
when processing $k$. 
Note that the inequality 
$
\sum_{k} a'_kb^R_k< \lambda \min(m_1/\ell_1,n)
$
holds,
since $\sum_{k} a'_kb^R_k$ also represents the total number of witnesses of $A_{T'}\Boolprod B$,
i.e., the number of triples $(i,j,k)$ such that $A_{T'}[i,k]=B[k,j]=1$ (observe that there are at most
$\lambda$ pairs $(i,j)$ satisfying this condition, all such that $i\in T'$).
Since $\sum_{k}\lambda_k\le\lambda\le n^2$,
this complexity is upper bounded by
\[
\tilde O\left(n^2+\sqrt{\lambda+n}\times \sqrt{\sum_{k=1}^n a'_kb^R_k}\right)
=
\tilde O\left(n^2+\sqrt{\lambda +n}\times 
\sqrt{\lambda\min\left(\frac{m_1}{\ell_1},n\right)}
\right)
=
\tilde O\left(n^2+\lambda\times \sqrt{\frac{m_1}{\ell_1}}\right).
\]

Computing $A\Boolprod B^{U'}$ is done similarly to the computation of $A_{T'}\Boolprod B$
with cost
$
\tilde O\left(n^2+\lambda\sqrt{m_2/\ell_3}\right).
$
\end{proof}

We now compare the results of Theorem \ref{prop} to previous works. 
For the case $m_1=m_2\approx n^2$, the bounds obtained in Theorem \ref{prop} are not better than the best known output-sensitive 
algorithms for Boolean matrix multiplication \cite{Jeffery+ICALP12,LeGallISAAC12,Lingas11}. Interestingly, we nevertheless recover
the same complexity $O(\lambda\sqrt{n})$ as in~\cite{LeGallISAAC12} for the region $n^{3/2}\le \lambda\le n^{2}$, but using different methods
(this is done by taking $\ell_1=m_1/(n+1)$, which gives $T'=\{1,\ldots,n\}$ and
reduces the computation of $A\Boolprod B$ to the computation of only $A_{T'}\Boolprod B$).
Consider now sparse input matrices and, for concreteness, focus on
the case $m_1=m_2$  (we denote this value simply by $m$). The complexity of the 
algorithm by Amossen and Pagh \cite{Amossen+09}, while not stated in this form, can be written as
\[
\tilde O\left(M(\ell'_1,\ell'_2,\ell'_1,\lambda)+m^2/\ell_2+\lambda m/\ell_1+n^2\right)
\] 
using the notations of Theorem \ref{prop}. In comparison, 
Theorem \ref{prop} gives (by choosing $\ell_1=\ell_3$) the upper bound
\[
\tilde O\left(M(\ell'_1,\ell'_2,\ell'_1,\lambda)+\min(m\sqrt{\lambda/\ell_2},m^2/\ell_2)+\lambda\sqrt{m/\ell_1}+n^2\right).
\] 
We see that the second and third terms in our complexity are never worse. 
In order to evaluate quantitatively the speedup obtained in the quantum setting, 
let us consider the case when only the input matrices are sparse (i.e., $\lambda\approx n^2$).
In this case, the algorithm by Amossen and Pagh has the same complexity as the algorithm by 
Yuster and Zwick \cite{Yuster+05} described in the introduction. In comparison, 
Theorem~\ref{prop} gives the following result, which shows that our quantum 
algorithm is better than their classical algorithm, as discussed in the introduction.
\addtocounter{section}{-4}
\begin{theorem}[complete version]
Let $A$ and $B$ be two $n\times n$ Boolean matrices with at most $m_1$ and $m_2$ non-zero entries in $A$ and $B$,
respectively.
There exists a quantum algorithm that computes, with high probability,
the Boolean matrix product $A\Boolprod B$ and has time complexity
\[
\left\{\begin{array}{ll}
\tilde O(n\times \min(m_1,m_2))&\textrm{if  }1\le \sqrt{m_1m_2}\le n,\\
\tilde O(n^2)&\textrm{if }n\le \sqrt{m_1m_2}\le n^{1+\alpha/2},\\
\tilde O\left((m_1m_2)^\frac{\beta}{1+2\beta}n^\frac{2+2\beta-\alpha\beta}{1+2\beta}\right)&
\textrm{if }n^{1+\alpha/2}\le \sqrt{m_1m_2}\le n^{\omega-1/2},\\
\tilde O(n^\omega)&\textrm{if }n^{\omega-1/2}\le \sqrt{m_1m_2}\le n^2.\\
\end{array}
\right. 
\] 
\end{theorem}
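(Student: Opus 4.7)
My plan is to apply Theorem~\ref{prop} in the two intermediate regimes with a uniform template choice of parameters, and to handle the two extreme regimes separately. Throughout I would use the trivial $\lambda\le n^2$ and the fact that Boolean matrix multiplication of matrices of side-lengths $\ell'_1,\ell'_2,\ell'_3$ reduces to integer matrix multiplication, giving $M(\ell'_1,\ell'_2,\ell'_3,\lambda)=\tilde O(n^{\omega(\log_n\ell'_1,\log_n\ell'_2,\log_n\ell'_3)})$. Write $m=\sqrt{m_1m_2}$ and $\mu=\log_n m$.

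In the densest regime, $m\ge n^{\omega-1/2}$, I would simply compute $A\Boolprod B$ algebraically in $\tilde O(n^\omega)$. In the sparsest regime, $m\le n$, I would find the non-zero entries of the sparser of $A,B$ (say $A$, with $m_1\le m_2$) via quantum enumeration and, for each such non-zero $(i,k)$, enumerate the non-zeros of the $k$-th row of $B$; a direct Cauchy--Schwarz-type bound shows the total cost is $\tilde O(n\min(m_1,m_2))$, matching the first case.

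For $n\le m\le n^{\omega-1/2}$ I would apply Theorem~\ref{prop} with $\ell_1=m_1$, $\ell_3=m_2$ and $\ell_2=n^k$. The choices $\ell_1=m_1$, $\ell_3=m_2$ force $T'=\{k:a^R_k=0\}$ and $U'=\{k:b^C_k=0\}$, so that $A_{T'}\Boolprod B$ and $A\Boolprod B^{U'}$ vanish and their contributions $\lambda\sqrt{m_i/\ell_i}\le\lambda\le n^2$ are absorbed into the additive $n^2$. The choices of $k$ below all satisfy $k\le 2\mu-2$, equivalently $\ell_2\le m_1m_2/n^2$, so the middle term of Theorem~\ref{prop} becomes $n\sqrt{m_1m_2/\ell_2}=n^{1+\mu-k/2}$, while the algebraic term is bounded via Facts~\ref{fact_RMM1} and~\ref{fact_RMM2} by $\tilde O(n^{\omega(1,k,1)})$. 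Balancing these two terms gives, for $\mu\in[1+\alpha/2,\omega-1/2]$, the choice $k=(2\mu-2+2\alpha\beta)/(1+2\beta)\in[\alpha,1]$ with matching complexity $n^{(2\beta\mu+2+2\beta-\alpha\beta)/(1+2\beta)}=(m_1m_2)^{\beta/(1+2\beta)}n^{(2+2\beta-\alpha\beta)/(1+2\beta)}$; and for $\mu\in[1,1+\alpha/2]$, the choice $k=2\mu-2\in[0,\alpha]$ puts $\omega(1,k,1)=2$, so that both balanced terms are $\tilde O(n^2)$.

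The main difficulty I expect is the case analysis: verifying that the chosen $k$ lies in the correct sub-interval of $[0,1]$ in each regime so that the intended piecewise bound in Fact~\ref{fact_RMM1} applies, and that $\min(\lambda,m_1m_2/\ell_2)=m_1m_2/\ell_2$ under the stated choice. The endpoint identities $\mu=1+\alpha/2\iff k=\alpha$ and $\mu=\omega-1/2\iff k=1$ (the latter using $\beta(1-\alpha)=\omega-2$) must be checked to recover exactly the stated regime boundaries. A secondary bookkeeping issue is the asymmetric case $m_1\ne m_2$: when $\min(m_i,n)<n$ the algebraic term only gets smaller, so the bound derived under $\ell'_1=\ell'_3=n$ remains a valid upper bound, and the expression depending only on $\sqrt{m_1m_2}$ still holds.
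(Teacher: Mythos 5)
Your proposal follows essentially the same route as the paper's proof: the same quantum-enumeration argument for the sparsest regime, the same choices $\ell_1=m_1$, $\ell_3=m_2$ (making the last two terms of Theorem~\ref{prop} trivial), and the same $\ell_2=n^k$ — the paper's explicit $\ell_2=m_1m_2/n^2$ and $\ell_2=(m_1m_2)^{1/(1+2\beta)}n^{2(\alpha\beta-1)/(1+2\beta)}$ are exactly your $k=2\mu-2$ and $k=(2\mu-2+2\alpha\beta)/(1+2\beta)$ — with the endpoint identities you flag indeed checking out. One small correction to your bookkeeping: under $\ell_2\le m_1m_2/n^2$ we have $m_1m_2/\ell_2\ge n^2\ge\lambda$, so $\min(\lambda,m_1m_2/\ell_2)=\lambda$, not $m_1m_2/\ell_2$; your bound $n\sqrt{m_1m_2/\ell_2}$ is nonetheless correct because it follows from bounding $\lambda\le n^2$.
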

\addtocounter{section}{+4}
\begin{proof}
First consider the case $\sqrt{m_1m_2}\le n$.
Assume for now 
that $m_1\le m_2$. We
use the following strategy: we first use quantum enumeration to find
all the non-zero entries of $A$ and, then, for each such entry $A[i,k]$, we output all the 
$j$'s such that $B[k,j]=1$. The complexity of this strategy is $\tilde O(\sqrt{(m_1+1)n^2}+m_1n)=\tilde O(m_1n)$.
The same argument gives the upper bound $\tilde O(m_2n)$ when $m_2\le m_1$.

If $n\le \sqrt{m_1m_2}\le n^{1+\alpha/2}$, then we use the quantum algorithm of Theorem \ref{prop} with parameters $\ell_1=m_1$,
$\ell_2=m_1m_2/n^2$, $\ell_3=m_2$, and applying the algorithm for rectangular matrix multiplication over a field described in 
Section \ref{prelim} for the part $M(\ell'_1,\ell'_2,\ell'_3,n^2)$. This gives overall complexity $\tilde O(n^2)$ time.

If $n^{1+\alpha/2}\le \sqrt{m_1m_2}\le n^{\omega-1/2}$, then we use the quantum algorithm of Theorem~\ref{prop} with parameters
$\ell_1=m_1$, $\ell_3=m_2$ and 
\[
\ell_2=(m_1m_2)^\frac{1}{1+2\beta}n^{\frac{2(\alpha\beta-1)}{1+2\beta}},
\]
giving overall complexity 
\[
\tilde O\left((m_1m_2)^\frac{\beta}{1+2\beta}n^\frac{2+2\beta-\alpha\beta}{1+2\beta}\right).
\]

Finally, if $\sqrt{m_1m_2}\ge n^{\omega-1/2}$, then we simply use the best existing  classical algorithm for dense matrix multiplication.
\end{proof}


\section*{Appendix: Proofs of Lemma \ref{lemma_dom} and Proposition \ref{prop_densedom_cl}}
In this appendix we give the proofs of Lemma \ref{lemma_dom} and Proposition \ref{prop_densedom_cl}.

\begin{proof}[Proof of Lemma \ref{lemma_dom}]
In the proof we will use the notation $\col{M}{k}$ to denote 
the number of finite entries in the $k$-th row of $M$, for any 
$n_1\times n_2$ matrix $M$ with entries in $\Int\cup\{\pm \infty\}$
and any $k\in\{1,\ldots,n_2\}$.

Our algorithm proceeds in several steps.\vspace{2mm}

\noindent{\bf Preprocessing: column balancing}

For each $r\in\{1,\ldots,t\}$, we do the following.
Consider the $u$ matrices $A^{(1)}_r,\ldots,A_r^{(u)}$. Each matrix has size $n\times n$ 
and  we know that the total number of finite entries in these $u$ matrices is at most $\ceil{m_1/t}$:
\begin{equation}\label{proof_eq1}
\sum_{x=1}^u\sum_{k=1}^n \col{A_r^{(x)}}{k}\le \ceil{m_1/t}.
\end{equation}
We will construct $u$ matrices $\tilde A^{(1)}_r, \ldots, \tilde A^{(u)}_r$, each of size $n\times 2n$.
Each $\tilde A^{(x)}_r$ will contain all the finite entries in $A_r^{(x)}$, but these $u$ matrices will
satisfy the following sparsity condition on each column: 
\begin{equation}\label{proof_eq2}
\sum_{x=1}^u\col{\tilde A_r^{(x)}}{k'}\le \ceil{m_1/(nt)} \:\:\:\textrm{ for all $k'\in\{1,\ldots,2n\}$}.
\end{equation}
These matrices are related to
the concept of column balancing developed in \cite{Duan+SODA09}.

Let us describe how to construct these matrices $\tilde A^{(1)}_r, \ldots, \tilde A^{(u)}_r$.
For each $k\in\{1,\ldots,n\}$, we first collect together all the finite entries in the $k$-th column of $A_r^{(1)},\ldots,A_r^{(u)}$
and sort them in increasing order. This gives, for each $k$, a sorted list of at most $nu$ numbers, with possible repetitions.
We then divide this list into successive parts $T_{r,k}^1,T_{r,k}^2,\ldots,T_{r,k}^{a_{r,k}}$, for some $a_{r,k}\ge 1$, such that 
\begin{equation*}\label{proof_eq2b}
\left\{
\begin{array}{ll}
|T_{r,k}^q|=\ceil{m_1/(nt)} &\textrm{ for } q\in\{1,\ldots,a_{r,k}-1\},\\
|T_{r,k}^q|\le\ceil{m_1/(nt)}&\textrm{ for } q=a_{r,k}.
\end{array}
\right.
\end{equation*}
Define $p_r=\sum_{k=1}^n a_{r,k}$ and notice that $p_r\le 2n$:
there are at most $n$ parts of size exactly $\ceil{m_1/(nt)}$ due to Equation (\ref{proof_eq1}),
and at most $n$ parts of size strictly less than $\ceil{m_1/(nt)}$ (these parts are among the $n$
parts with $q=a_{r,k}$).
To each pair $(k,q)$ with $k\in\{1,\ldots,n\}$ and $q\in\{1,\ldots,a_{r,k}\}$,
we assign an arbitrary index in $\{1,\ldots,p_r\}$, denoted $\rho_r(k,q)$, in a bijective way. 
Finally, for each $x\in\{1,\ldots,u\}$, we construct the $n\times 2n$  matrix $\tilde A^{(x)}_r$ as follows: 
for all $i\in\{1,\ldots,n\}$ and all $k'\in\{1,\ldots,2n\}$,
\[
\tilde A^{(x)}_r[i,k']=
\left\{\begin{array}{ll}
A^{(x)}_r[i,k]&\textrm{ if } k'\in\{1,\ldots,p_r\} \textrm{ and  $A^{(x)}_r[i,k]\in T_{r,k}^{q}$, where $(k,q)=\rho_r^{-1}(k')$},\\
\infty&\textrm{ otherwise.}
\end{array}
\right. 
\]
This means that each finite entry of $A_r^{(x)}$ appears in $\tilde A_r^{(x)}$, in the same row but generally in a different column.
By construction, Equation (\ref{proof_eq2}) holds.
The overall cost of this (classical) preprocessing step is $\tilde O(n^2tu)$ time.\vspace{2mm}

\noindent{\bf Preprocessing: recording relevant information about the input matrices}

Since the complexity of the quantum procedure described in the last part of the proof will 
depend crucially on the way information about matrices $\tilde A_r^{(x)}$ and $B_r^{(y)}$ is stored,
we introduce adequate data structures to record this information.  

For each $x\in\{1,\ldots,u\}$, we do the following.
For all $r\in\{1,\ldots,t\}$ we list the finite entries in each column of $\tilde A_r^{(x)}$, classically in time $\tilde O(n^2t)$,
and create a 3-dimensional array $\arrayf{U}^{(x)}$ such that $\arrayf{U}^{(x)}[r,k',b]$ records the index of the row of the $b$-th finite 
entry in the $k'$-th column of $\tilde A_r^{(x)}$, for each $r\in\{1,\ldots,t\}$, each $k'\in\{1,\ldots,2n\}$,
and each $b\in\{1,\ldots,\col{\tilde A_r^{(x)}}{k'}\}$. 

For each $y\in\{1,\ldots,v\}$, we do the following. 
We construct, classically in time $O(n^2t)$, a list 
containing all the finite entries of $B_1^{(y)},\ldots,B_t^{(y)}$.
Let us denote the total number of these finite entries by  $m^{(y)}$, and 
remember that we have $\sum_{y=1}^v m^{(y)}\le m_2$. We then create an array 
$\arrayf{V}^{(y)}$ of size $m^{(y)}$:  for each $a\in\{1,\ldots,m^{(y)}\}$, if the $a$-th element of the list is
$B^{(y)}_r[k,j]$, then $\arrayf{V}^{(y)}[a]$ is set to the 3-tuple $(r,k,j)$. 

The overall cost of this (classical) preprocessing step is $\tilde O(n^2t(u+v))$ time.
\vspace{2mm}

\noindent{\bf Construction of the matrices $\hat A_r^{(x)}$ and $\hat B_r^{(y)}$}

For each $r\in\{1,\ldots,t\}$ and each $x\in\{1,\ldots,u\}$, we construct an $n\times 2n$ Boolean matrix $\hat A^{(x)}_r$
as follows: for all $i\in\{1,\ldots,n\}$
and all $k'\in\{1,\ldots,2n\}$,
\[
\hat A^{(x)}_r[i,k']=1 \textrm{ iff }\tilde A^{(x)}_r[i,k']\neq \infty .
\]
For each $r\in\{1,\ldots,t\}$ and each $y\in\{1,\ldots,v\}$, we construct an $2n\times n$ Boolean matrix $\hat B_r^{(y)}$
as follows: for all $k'\in\{1,\ldots,2n\}$ and all $j\in\{1,\ldots,n\}$,
\[
\hat B_r^{(y)}[k',j]=1 \textrm{ iff } k'\in\{1,\ldots,p_r\} \textrm{ and } B_r^{(y)}[k,j]\ge \max T_{r,k}^{q}, \textrm{ where }(k,q)=\rho_r^{-1}(k').
\]
These are the matrices mentioned in the statement of the lemma.
The overall cost of this (classical) construction step is $\tilde O(n^2t(u+v))$ time.\vspace{2mm}

\noindent{\bf Relation with the matrix $C_2$}

For each $r\in\{1,\ldots,t\}$, each $x\in\{1,\ldots,u\}$ and each $y\in\{1,\ldots,v\}$,
consider the Boolean product $\hat A_r^{(x)}\Boolprod\hat B_r^{(y)}$.
This product gives us some of the non-zero entries of $A_r^{(x)}\ast B_r^{(y)}$,
but not all.
Indeed, by definition, $\hat A_r^{(x)}[i,k']=1$ if and only if $A^{(x)}_r[i,k]\in T_{r,k}^{q}$,
where $(k,q)=\rho_{r}^{-1}(k')$.
The indexes of the non-zero entries of $\hat A_r^{(x)}\Boolprod\hat B_r^{(y)}$ are thus precisely all the 
$(i,j)\in\{1,\ldots,n\}\times \{1,\ldots,n\}$ for which there exists some $k\in\{1,\ldots,n\}$ satisfying 
\[
A^{(x)}_r[i,k]\in T_{r,k}^{q} \textrm{ for some } q\in\{1,\ldots,a_{r,k}\} \textrm{ and } B_r^{(y)}[k,j]\ge \max T_{r,k}^{q}.
\]

Let us now consider the remaining non-zero entries of 
$A_r^{(x)}\ast B_r^{(y)}$: the $(i,j)\in\{1,\ldots,n\}\times \{1,\ldots,n\}$
for which there exists some $k\in\{1,\ldots,n\}$ satisfying
\begin{equation}\label{eq_cond}
A_r^{(x)}[i,k]\in T_{r,k}^{q} \textrm{ for some } q\in\{1,\ldots,a_{r,k}\} \textrm{ and } A_r^{(x)}[i,k]\le B_r^{(y)}[k,j]< \max T_{r,k}^{q}.
\end{equation}
Define the $n\times n$ matrix $D$ with entries in $S\cup \{(0,0)\}$ as follows.
For any $(i,j)\in\{1,\ldots,n\}\times \{1,\ldots,n\}$, the entry $D[i,j]$ is the largest
element $(x,y)\in S$ such that Equation (\ref{eq_cond}) holds for some $r\in\{1,\ldots,t\}$
and some $k\in\{1,\ldots,n\}$, if at least one such $(x,y)$ exists, and $D[i,j]=(0,0)$
otherwise.

We then have
\[
C_2[i,j]=\max\left\{\{D[i,j]\}\cup \{(x,y)\in S\:|\: \sum_{r=1}^t\hat A_r^{(x)}\Boolprod \hat B_r^{(y)}[i,j]=1\}\right\},
\]
for all $(i,j)\in\{1,\ldots,n\}\times \{1,\ldots,n\}$,
as claimed in the statement of the lemma.

\vspace{2mm}

\noindent{\bf Construction of the matrix $D$}

We finally show how to compute the matrix $D$. The idea is to find,
for all $(x,y)\in S$ in decreasing order,
all the pairs of indexes $(i,j)\in\{1,\ldots,n\}\times \{1,\ldots,n\}$ such that Equation (\ref{eq_cond}) holds
for some $r\in\{1,\ldots,t\}$ and some $k\in\{1,\ldots,n\}$, and strike out those 
pairs as soon as they are found.

\begin{figure}
\fbox{
\begin{minipage}{16 cm}
\begin{codebox}
\li $R\gets\emptyset$;
\li \For all $(i,j)\in\{1,\ldots,n\}\times \{1,\ldots,n\}$ \kw{do}  $D[i,j]\gets(0,0)$; \kw{enddo} 
\li   \For all $y\in\{1,\ldots,v\}$, all $r\in\{1,\ldots t\}$ and all $(j,k)\in\{1,\ldots,n\}\times \{1,\ldots,n\}$ \kw{do}
\li   \hspace{5mm} compute the smallest $q\in\{1,\ldots,a_{r,k}\}$ satisfying $B^{(y)}_r[k,j]< \max T_{r,k}^{q}$ and denote it by $q^{(y)}_{rkj}$;
\li   \kw{enddo}
\li   \For all $(x,y)\in S$ in decreasing order \kw{do}
\li   \hspace{5mm} \id{test-full} $\gets$ \id{false};
\li   \hspace{5mm} \While \id{test-full} = \id{false} \kw{do}
\li   \hspace{10mm} find $(r,i,j,k)\in \{1,\ldots,t\}\times\{1,\ldots,n\}^3$ such that $(i,j)\not\in R$ and $\tilde A_r^{(x)}[i,\rho_r(k,q^{(y)}_{rkj})]\le B_r^{(y)}[k,j]$;
\zi  \hspace{-6mm}\# \hspace{13mm} \emph{comment: the search of Step 9 is actually done over } $\Gamma^{(x,y)}\subset \{1,\ldots,t\}\times\{1,\ldots,n\}^3$
\li   \hspace{10mm} \If a solution $(r,i,j,k)$ is found 
\li  \hspace{15mm}\kw{then} $D[i,j]\gets (x,y)$; $R\gets R\cup \{(i,j)\}$;
\li  \hspace{15mm}\kw{else} \id{test-full} $\gets \id{true}$;
\li   \hspace{5mm}\kw{enddo} 
\li   \hspace{0mm}\kw{enddo} 
\End
\end{codebox}\vspace{0mm}
\end{minipage}
}
\caption{Procedure computing the matrix $D$.}\label{fig:alg}
\end{figure}

The procedure for computing $D$ is described in Figure \ref{fig:alg}. The set $R$,
initially empty, records all pairs $(i,j)$ for which $D[i,j]$ has already been 
computed. During the loop of Steps 8-13 the procedure enumerates all the $(i,j)\in(\{1,\ldots,n\}\times \{1,\ldots,n\})\setminus R$
such that Equation (\ref{eq_cond}) holds for some $r\in\{1,\ldots,t\}$
and some $k\in\{1,\ldots,n\}$. Note that only the non-($-\infty$) entries of
$B_r^{(y)}$ need to be considered and, from Equation (\ref{eq_cond}),
for each such non-($-\infty$) entry $B_r^{(y)}[k,j]$ only
the non-($\infty$) entries $A_r^{(x)}[i,k]$ of $A_r^{(x)}$ such that  
\[
A_r^{(x)}[i,k]\in T_{r,k}^{q^{(y)}_{rkj}}
\]
need to be considered, 
where $q^{(y)}_{rkj}$ is the smallest integer in $\{1,\ldots,a_{r,k}\}$ such that 
\[
B^{(y)}_r[k,j]< \max T_{r,k}^{q^{(y)}_{rkj}}.
\] 
By construction, these non-($\infty$) entries of $A_r^{(x)}$ are in 
the $\rho_r(k,q^{(y)}_{rkj})$-th column of $\tilde A^{(x)}_r$.
The loop of Steps 8-13 thus
performs successive quantum searches over the set 
\[
\Gamma^{(x,y)}=\left\{
(r,i,j,k)\in \{1,\ldots,t\}\times\{1,\ldots,n\}^3\:|\: B_r^{(y)}[k,j]\neq -\infty \:\textrm{ and }\: \tilde A_r^{(x)}[i,\rho_r(k,q^{(y)}_{rkj})]\neq \infty
\right\},
\]
looking for elements $(r,i,j,k)\in \Gamma^{(x,y)}$ such that 
\[
(i,j)\notin R\: \textrm{ and }\: \tilde A_r^{(x)}[i,\rho_r(k,q^{(y)}_{rkj})]\le B_r^{(y)}[k,j].
\]

The procedure of Figure \ref{fig:alg} correctly computes the matrix $D$ whenever the quantum 
enumeration does not err, that is, with probability at least $1-1/\poly(n)$ if safe Grover search is
used, as discussed in Section \ref{prelim}. Let us consider its time complexity. 
The cost of Step 2 is $\tilde O(n^2)$, and the cost of the loop of Steps 3-5 is $\tilde O(n^2tv)$
since each $q^{(y)}_{rkj}$ can be found in $\poly(\log n)$ time using binary search. 

In order to evaluate the cost of the loop of Steps 8-13, we need to discuss in more details
how to perform the quantum search over the set $\Gamma^{(x,y)}$ since subtle issues 
arise when considering how to access time-efficiently the relevant entries of the matrices
and how to check if an element is a solution in $\poly(n)$ time.
Note that 
\[
|\Gamma^{(x,y)}|=
\sum_{r=1}^t
\sum_{\begin{subarray}{c}j,k \textrm{ such that }\\B_r^{(y)}[k,j]\neq -\infty\end{subarray}}
\col{\tilde A_r^{(x)}}{\rho_r(k,q_{rkj}^{(y)})}.
\]
We define a bijection $g$ from the set $\{1,\ldots,|\Gamma^{(x,y)}|\}$ to the set 
$\Gamma^{(x,y)}$ as follows. Remember that the data structures $\arrayf{U}^{(x)}$ and $\arrayf{V}^{(y)}$
 are available, recording information about the $\tilde A_r^{(x)}$'s and the $B_r^{(y)}$'s,
 respectively.
For notational convenience for each $a\in\{1,\ldots, m^{(y)}\}$ with corresponding value $\arrayf{V}^{(y)}[a]=(r,k,j)$, 
we will write $\arrayf{V}_1[a]=r$, $\arrayf{V}_2[a]=k$, $\arrayf{V}_3[a]=j$ and $\arrayf{W}[a]=\rho_r(k,q_{rkj}^{(y)})$.
Note that these four values can be immediately obtained from $\arrayf{V}^{(y)}[a]$.
We define the function $g$ 
as follows: for all $a\in\{1,\ldots,m^{(y)}\}$ and 
all $b\in\{1,\ldots,\col{\tilde A_{\arrayf{V_1}[a]}^{(x)}}{\arrayf{W}[a]}\}$,
\[
g\left(b+\sum_{c=1}^{a-1} \col{\tilde A_{\arrayf{V_1}[c]}^{(x)}}{\arrayf{W}[c]} \right)=(\arrayf{V_1}[a],\arrayf{U}^{(x)}\Big[\arrayf{V_1}[a],\arrayf{W}[a],b\Big],\arrayf{V_3}[a],\arrayf{V_2}[a]).
\]
It is easy to check that $g$ is a bijection from $\{1,\ldots,|\Gamma^{(x,y)}|\}$ to $\Gamma^{(x,y)}$.
The crucial point here is that the function $g$ can be evaluated in $\poly(\log n)$ time since 
$\arrayf{U}^{(x)}$ and $\arrayf{V}^{(y)}$ are available (in particular, given any $z\in \{1,\ldots,|\Gamma^{(x,y)}|\}$
one can find the values $a$ and $b$ such that 
$z=b+\sum_{c=1}^{a-1} \col{\tilde A_{\arrayf{V_1}[c]}^{(x)}}{\arrayf{W}[c]}$ efficiently,
using binary search for instance).
We can then implement Step 9 by performing quantum searches over the set 
$\{1,\ldots,|\Gamma^{(x,y)}|\}$. From the discussion in 
Section \ref{prelim}, the time complexity of the loop of Steps 8-13,
for fixed $(x,y)$,
is thus
\[
\tilde O\left(
\sqrt{|\Gamma^{(x,y)}|\times (\lambda^{(x,y)}+1)}\right)=
\tilde O\left(
\sqrt{\left(\sum_{r=1}^t \sum_{\begin{subarray}{c}j,k \textrm{ such that }\\B_r^{(y)}[k,j]\neq -\infty\end{subarray}}
\col{\tilde A_r^{(x)}}{\rho_r(k,q_{rkj}^{(y)})}\right)\times (\lambda^{(x,y)}+1)}\right),
\]
where $\lambda^{(x,y)}$ denotes the number of elements found during the execution of the loop 
(i.e., the number of new entries of $D$ computed).

The total cost of the procedure of Figure \ref{fig:alg} is then
\[
\tilde O\left(n^2tv+\sum_{x=1}^u\sum_{y=1}^v
\sqrt{\left(\sum_{r=1}^t \sum_{\begin{subarray}{c}j,k \textrm{ such that }\\B_r^{(y)}[k,j]\neq -\infty\end{subarray}}
\col{\tilde A_r^{(x)}}{\rho_r(k,q_{rkj}^{(y)})}\right)\times (\lambda^{(x,y)}+1)}\right)
\]
Using Equation (\ref{proof_eq2}), the inequality
$\sum_{x=1}^u\sum_{y=1}^v\lambda^{(x,y)}\le n^2$, and the Cauchy-Schwarz inequality, 
we can rewrite this expression as
\begin{align*}
\tilde O\left(n^2tv+\sum_{y=1}^v
\sqrt{\left(\sum_{r=1}^t \sum_{\begin{subarray}{c}j,k \textrm{ such that }\\B_r^{(y)}[k,j]\neq -\infty\end{subarray}}
\frac{m_1}{nt}\right)\times (u+\sum_{x=1}^u\lambda^{(x,y)})}\right)
&=\tilde O\left(n^2tv+
\sqrt{
\frac{m_1m_2}{nt}\times (uv+\sum_{x=1}^u\sum_{y=1}^v\lambda^{(x,y)})
}
\right)\\
&=\tilde O\left(n^2tv+\sqrt{\frac{m_1m_2n}{t}}+\sqrt{\frac{m_1m_2uv}{tn}}\right).
\end{align*}
This concludes the description of how to construct the matrix $D$.

Since the preprocessing has cost $\tilde O(n^2t(u+v))$, the overall complexity of 
the algorithm is 
\[
\tilde O\left(n^2t(u+v)+\sqrt{\frac{m_1m_2n}{t}}+\sqrt{\frac{m_1m_2uv}{tn}}\right).
\]
This concludes the 
proof of Lemma \ref{lemma_dom}.
\end{proof}

\begin{proof}[Proof of Proposition \ref{prop_densedom_cl}]
The algorithm is essentially the same as in the proof of Proposition \ref{prop_densedom}.
The only change is that a classical algorithm, which we describe below, is used
instead of the quantum algorithm in Lemma \ref{lemma_dom}.
In the proof of Lemma \ref{lemma_dom}
we use classical enumeration (i.e., exhaustive search) instead of quantum enumeration
in the procedure of Figure~\ref{fig:alg}.
The complexity of the loop of Steps 8-13 
of the procedure of Figure \ref{fig:alg}
is thus $|\Gamma^{(x,y)}|$, and
the total complexity of the procedure becomes
\begin{align*}
\tilde O\left(n^2tv+\sum_{x=1}^u\sum_{y=1}^v
\sum_{r=1}^t \sum_{\begin{subarray}{c}j,k \textrm{ such that }\\B_r^{(y)}[k,j]\neq -\infty\end{subarray}}
\col{\tilde A_r^{(x)}}{\rho_r(k,q_{rkj}^{(y)})}\right)
&=\tilde O\left(n^2tv+\frac{m_1m_2}{tn}\right).
\end{align*}
The overall complexity of 
the classical version of Lemma \ref{lemma_dom} is thus
\[
\tilde O\left(n^2t(u+v)+n^2tv+\frac{m_1m_2}{tn}\right)=\tilde O\left(n^2t(u+v)+\frac{m_1m_2}{tn}\right),
\]
which gives the upper bound claimed.
\end{proof}

\end{document}